\newcommand{\R}{{\mathbb{R}}}
\newcommand{\C}{{\mathbb{C}}}
\newcommand{\Z}{{\mathbb{Z}}}
\newcommand{\res}{\text{\upshape Res} \,}
\newcommand{\re}{\text{\upshape Re} \,}
\newcommand{\im}{\text{\upshape Im} \,}
\newcommand{\cut}{\text{cut}}
\newtheorem{theorem}{Theorem}[section]
\newtheorem{proposition}[theorem]{Proposition}
\newtheorem{lemma}[theorem]{Lemma}
\newtheorem{assumption}[theorem]{Assumption}
\newtheorem{remark}[theorem]{Remark}
\newtheorem{RHproblem}[theorem]{RH problem}
\numberwithin{equation}{section}
\begin{document}

\title[A new approach to integrable evolution equations]{A new approach to integrable evolution \\ equations on the circle}

\author{A. S. Fokas}
\address{Department of Applied Mathematics and Theoretical Physics, \\ University of Cambridge, Cambridge CB3 0WA, United Kingdom  \\ and \\ Viterbi School of Engineering, University of Southern California,  \\ Los Angeles, California, 90089-2560, USA.}
\email{t.fokas@damtp.cam.ac.uk} 

\author{J. Lenells}
\address{Department of Mathematics, KTH Royal Institute of Technology, \\ 100 44 Stockholm, Sweden.}
\email{jlenells@kth.se}

\begin{abstract} 
\noindent
We propose a new approach for the solution of initial value problems for integrable evolution equations in the periodic setting based on the unified transform. Using the nonlinear Schr\"odinger equation as a model example, we show that the solution of the initial value problem on the circle can be expressed in terms of the solution of a Riemann-Hilbert problem whose formulation involves quantities which are defined in terms of the initial data alone. Our approach provides an effective solution of the problem on the circle which is conceptually analogous to the solution of the problem on the line via the inverse scattering transform. 
\end{abstract}

\maketitle

\noindent
{\small{\sc AMS Subject Classification (2020)}: 	35Q55, 37K15, 35G30.}

\noindent
{\small{\sc Keywords}:  Integrable evolution equation, periodic solution, Riemann-Hilbert problem, finite-gap solution, inverse scattering, unified transform method, Fokas method, linearizable boundary condition.}

\setcounter{tocdepth}{1}
\tableofcontents

\section{Introduction}
Following the seminal discovery that the Korteweg--de Vries (KdV) equation can be solved analytically via a novel methodology \cite{GGKM1967}, Peter Lax understood that the distinguished feature of this equation was that it can be written as the compatibility condition of two linear eigenvalue equations, later called a Lax pair. 
An explosion of results regarding equations possessing a Lax pair, later called integrable, occurred after the decisive work of Zakharov and Shabat \cite{ZS1972}: the nonlinear Schr\"odinger (NLS) equation, which is  another `generic' equation of physical significance, namely, an equation which is derived under natural asymptotic considerations from a large class of PDEs,  can also be linearized via the analysis of its associated Lax pair. In this way, a new method in mathematical physics was born known as the inverse scattering transform. This method, which as clearly understood by Mark Ablowitz et al \cite{AKNS1974} (see also \cite{ZM1974, ZS1974}) can be thought of as the implementation of a `nonlinear Fourier transform', consists of two steps: (1) The first step, often referred to as the solution of the direct problem, involves the construction of the so-called scattering data, usually denoted by $a(k)$ and $b(k)$. These functions, which are defined in the spectral (Fourier) space  are expressed in terms of the initial datum via the solution of a linear Volterra integral equation. In the linear limit where the solution $q(x ,t)$ of the associated nonlinear PDE is assumed to be small, $a(k)$ tends to $1$ and $b(k)$ tends to the usual Fourier transform of the initial datum, $q_0(x)=q(x,0)$. (2)  The second step, often referred to as the solution of the inverse problem, involves the construction of the solution $q(x, t)$ in terms of time-dependent scattering data. By employing the $t$-part of the Lax pair it can be shown that the function $a(k)$ is time independent, whereas the time dependence of $b(k, t)$ is simple, namely it is the same as the time dependence of the underlying linear Fourier transform. The solution $q(x, t)$ can be expressed via the solution of a linear integral equation of the Fredholm type. In the case of the KdV, this is the so-called Gelfand--Levitan--Marchenko (GLM) equation, first obtained in connection with the scattering theory of the time-independent Schr\"odinger operator. Although it was realized by Zakharov and Shabat that the inverse problem of NLS can also be formulated as a classical problem in complex analysis called a Riemann-Hilbert (RH) problem, the GLM  equation continued to dominate the theory of integrable systems until the works of one of the authors and Ablowitz: it was shown in \cite{FA1983} that the inverse problem associated with the Benjamin--Ono equation gives rise to a RH problem, which in contrast to the local nature of the RH problems arising in the usual integrable evolution equations in one space variable, is  nonlocal. Actually, a nonlocal RH problem also characterizes the solution of the inverse problems associated with the initial value problem of many integrable nonlinear evolution equations in two space dimensions, including KPI \cite{FA1983b}, DSI \cite{FS1989}, and the N-wave interactions \cite{SF1991}. The formulation of the inverse problems in terms of either local or nonlocal RH problems made it clear that the essence of the underlying mathematical structure relevant to the solution of these problems is the following:  there exist eigenfunctions of the associated $t$-independent part of the Lax pair which are sectionally holomorphic, namely, they are holomorphic in different domains of the complex-plane.  Interestingly, there exist many nonlinear integrable evolution PDEs in two space dimensions, like KPII and DSII, whose associated eigenfunctions are nowhere analytic in the complex k-plane. The analysis of these equations requires the formulation of a so-called d-bar problem, instead of a RH problem. The d-bar methodology was introduced in the area of integrable systems by Beals and Coifman \cite{BC1985} who employed it for equations in one space dimensions (for which the RH formulation is actually preferable). The d-bar formulation was used in two space dimensions where its use is indispensable, by one of the authors, Ablowitz, and others \cite{FA1983c, ABF1983, BC1986, BC1989, FS1992}. Following the above developments, it became clear that the initial value problem for nonlinear evolution PDEs in one and two space variables can be solved by employing a local RH formalism and either a nonlocal RH or a d-bar formalism, respectively. Furthermore, it was shown by one of the authors and Gelfand that Fourier transforms in one and two space dimensions can also be derived via a RH and a d-bar formalism, respectively \cite{FG1994}. Hence, the initial value problem for linear and for integrable nonlinear evolution equations in one and two space dimensions can be solved via linear and suitable nonlinear Fourier transforms, which can be constructed via RH and d-bar formalisms.

Solving one-dimensional evolution equations formulated on the half-line or on an interval is far more challenging than solving the associated initial value problem. The first such problem to be analyzed for nonlinear integrable equations was the so-called periodic problem, namely the problem formulated on the finite interval $0<x<L$, with $x$-periodic initial conditions. In this direction, remarkable results were obtained by many authors using techniques of algebraic geometry. 
For almost 100 years the only known $x$-periodic solution of the KdV equation was the periodic analogue of the one-soliton solution known as the cnoidal wave solution, obtained by Korteweg and de Vries in 1895. The possibility of constructing the $x$-periodic analogue of the multi-soliton solution of the KdV equation became clear to Russian mathematicians in 1973-1974, following the discovery of the work of Akhiezer \cite{A1961} by Matveev,  and the pioneering ideas of  Sergei Novikov \cite{N1974}. The first explicit formulas expressed in terms of theta functions were obtained by Alexander Its and Matveev \cite{IM1975}, and then by Boris Dubrovin \cite{D1975}. The first explicit expression for the associated eigenfunction obtained by Its was never published but was reported in \cite{DMN1976}. Parallel developments  took place in USA with the works of Kac and van Moerbeke \cite{KvM1975}, Lax \cite{L1975}, McKean and van Moerbeke \cite{MvM1975}, and Flaschka and McLaughlin \cite{FM1976}. The NLS was analyzed in \cite{I1976} and in \cite{IK1976}. The extension of the above results to multidimensions was achieved by Igor Krichever \cite{K1975}. The relevant approach was later called the Baker--Akhiezer formalism after the realization that some of the key mathematical structures needed for the finite-gap integration were introduced by Baker in 1897 and 1907 \cite{B1928}. Hamiltonian aspects of the finite-gap formalism were developed by Novikov, Dubrovin, Bogoyavlenskij, and Gelfand and Dickey (see the collection of articles in \cite{IntegrableSystems1981}). Computational aspects of the finite-gap solutions have been discussed by several investigators; for example, Deconinck et al \cite{DHBvHS2004} and Klein and Richter \cite{KR2005}. An excellent review of the above remarkable developments and their implications in various branches of mathematics and physics is \cite{M2008} (for example, the construction of periodic solutions of the KP equation led to the solution of the famous Schottky problem in algebraic geometry). However, despite the above important results, the solution of the periodic problem for arbitrary initial conditions (as opposed to the particular initial conditions corresponding to the above exact solutions) remained open. The main difficulty of solving this problem using techniques of algebraic geometry is that it requires the construction of Riemann surfaces of infinite genus. Progress in this direction has been recently announced in \cite{MN2019}, where the approach of McKean, Trubowitz, and others is followed, namely a RH problem is  defined on the spectrum of the bands, but now an infinite number of gaps is allowed (this approach would be even more problematic for the focusing NLS). 

Following the solution of the initial value problem in one and two space dimensions, and the construction of  periodic analogues of  multi-soliton solutions, the main open problems in the theory of nonlinear integrable evolution equations, in addition to solving the general periodic problem, became the following: (1) the extension of  integrability to three space dimensions, and (2) the solution of general initial-boundary value problems. Despite the efforts of many researchers, it has not been possible to construct nonlinear integrable evolution equations in $3+1$, i.e. in three space dimensions and one temporal dimension (one of the authors has constructed integrable equations in $4+2$, and has shown that their initial value problem can be solved in terms of a nonlocal d-bar formalism \cite{F2006}; although these equations, at least aesthetically, provide the correct multi-dimensional analogue of the usual integrable equations, the question of reducing them to $3+1$ dimensional equations remains open \cite{FV2018}). Regarding (2) above, after the efforts of several researchers (see for example \cite{FI1996}), a novel new method known as the unified transform (UTM) or the Fokas method, emerged in 1997 \cite{F1997}.

The UTM is based on two novel ideas: (1) Perform the simultaneous spectral analysis of both parts of the Lax pair. This should be contrasted with the inverse scattering transform which analyses only the $t$-independent part. The inverse scattering transform corresponds to implementing a space-variable transform (like the $x$-Fourier transform). Hence, it implements a nonlinear version of the classical idea of the separation of variables, whereas the UTM, by analyzing simultaneously both parts of the Lax pair, goes beyond separation of variables. Indeed, for linear PDEs, the UTM gives rise to a completely new transform based on the synthesis as opposed to separation of variables \cite{F2000, DTV2014, FS2012}. This method has also led to a new rigorous treatment of the question of well-posedness for nonlinear evolution PDEs \cite{HM2018, OY2019}.
(2) The second ingredient of the UTM is the analysis of the global relation; this is an equation coupling appropriate transforms of the initial datum and all boundary values. As an example, in the case of NLS with $0<x< \infty$, the implementation of (1) expresses the solution $q (x, t)$ in terms of a RH problem whose only $(x, t)$ dependence is in the explicit form of $\exp(2ikx +4ik^2t)$ \cite{FIS2005}. The jump matrices of this RH problem depend on the spectral functions $a(k)$, $b(k)$, $A(k)$, and $B(k)$. The functions $a(k)$ and $b(k)$ can be computed in terms of the initial datum $q_0(x)$. However, $A(k)$ and $B(k)$ are defined in terms of $q(0,t)$ and $q_x(0,t)$, and  since only one of these functions  can be specified from the given boundary conditions, the functions $A(k)$ and $B(k)$ cannot be directly determined from the given data.  At this stage, the crucial importance of the global relation becomes evident: it can be used to characterize the unknown boundary value in terms of the given initial and boundary conditions; but, unfortunately this step, in general, is nonlinear \cite{FL2012, LF2015, LF2015b}. However, for a particular class of initial-boundary value problems called linearizable, this nonlinear step can by bypassed. As an example, it is noted that the boundary condition $q_x(0, t)+c q(0,t) = 0$, $c$ real constant, belongs to this class; in this case,  the functions $A(k)$ and $B(k)$ can be expressed in terms of $a(k)$, $b(k)$, and $c$ \cite{FIS2005}.

\subsection{A New Approach to the Periodic Problem}
In 2004, one of the authors and Alexander Its implemented the UTM to NLS on the finite interval $0<x<L$ \cite{FI2004}. In this case, the associated RH problem involves the spectral functions $a(k)$, $b(k)$, $A(k)$, and $B(k)$, mentioned above, as well as the functions $\mathcal{A}(k)$ and $\mathcal{B}(k)$ that are defined in terms of $q(L, t)$ and $q_x(L, t)$. For the periodic problem, $\mathcal{A}(k) = A(k)$ and $\mathcal{B}(k)=B(k)$. It was speculated by the authors of \cite{FI2004} that the periodic problem belongs to the linearizable class. It will be shown here that this is indeed the case: it is possible to construct the solution $q(x, t)$ of the NLS in terms of a RH problem whose jumps are expressed explicitly in terms of the spectral functions $a(k)$ and $b(k)$ (which as noted earlier can be determined in terms of the initial datum $q_0(x)$). In this sense, the periodic problem can be solved with the same level of efficiency as the initial value problem on the infinite line.

In more detail: (a) By using a novel transformation it is possible to map the RH problem formulated in \cite{FI2004} to one whose jump matrices depend on $a(k)$, $b(k)$, and the ratio $\Gamma(k) = B(k)/A(k)$. (b) A general initial-boundary value problem for a one-dimensional evolution equation like the NLS defined on the finite interval $0<x<L$ is formulated for $0<t<T$; although the solution $q(x, t)$ is independent of $T$, the functions $A(k)$ and $B(k)$ do depend on $T$.  It turns out that by  employing a suitable transformation it is possible to map the  RH problem obtained in (a) above to one whose jump matrices depend on $a(k)$, $b(k)$, and the ratio $\tilde{\Gamma}(k)$, where the function $\tilde{\Gamma}(k)$ is independent of $T$. (c) By using the global relation it is possible to express $\tilde{\Gamma}(k)$ in terms of $a(k)$ and $b(k)$. This formula involves a square root, which necessitates the introduction of suitable branch cuts, which in turn introduce additional jumps in the above RH problem. Actually, this square root involves the expression $4 - \Delta(k)^2$, where $\Delta(k)$ denotes the trace of the monodromy matrix. It is well known that this function plays a decisive role in the classical approach of the $x$-periodic problem.

For definiteness, we will use the NLS equation
\begin{align}\label{NLS}
i q_t+q_{xx}-2\lambda q |q|^2=0, \qquad \lambda = \pm 1,
\end{align}
as our model example, but it will be clear that the same steps can be implemented also for other integrable evolution equations, such as the KdV equation. Since the associated RH problems are somewhat different, we will treat both the defocusing (corresponding to $\lambda=1$) and the focusing (corresponding to $\lambda=-1$) versions of (\ref{NLS}).

\subsection{Outline of the paper}
In Section \ref{intervalsec}, we review the application of the unified transform method to the NLS equation posed on a finite interval $[0,L] \subset \R$. In Section \ref{periodicsec}, we restrict our attention to the class of spatially periodic solutions and formulate a RH problem from which the solution of (\ref{NLS}) can be obtained. The solution formula is not yet effective, because the formulation of this RH problem involves a certain function $\Gamma(k)$ which depends on the boundary values $q(0,t) = q(L,t)$ and $q_x(0,t) = q_x(L,t)$. However, in Section \ref{mainsec}, we show that the function $\Gamma(k)$ can be replaced by another function $\tilde{\Gamma}(k)$ without affecting the resulting expression for $q(x,t)$. The definition of $\tilde{\Gamma}(k)$ only involves the initial datum $q(x,0)$, $x \in [0,L]$, and this yields our main result, which is stated in Theorem \ref{mainth}. 
In Section \ref{exponentialexamplesec}, the main result is illustrated by means of an example for which the associated RH problem can be solved explicitly. In the terminology of the finite-gap approach, this example corresponds to one-gap solutions. 

\subsection{Notation}
The four open quadrants of the complex $k$-plane will be denoted by $D_j$, $j = 1,\dots, 4$, and $\Sigma$ will denote the contour $\R \cup i\R$ oriented as in Figure \ref{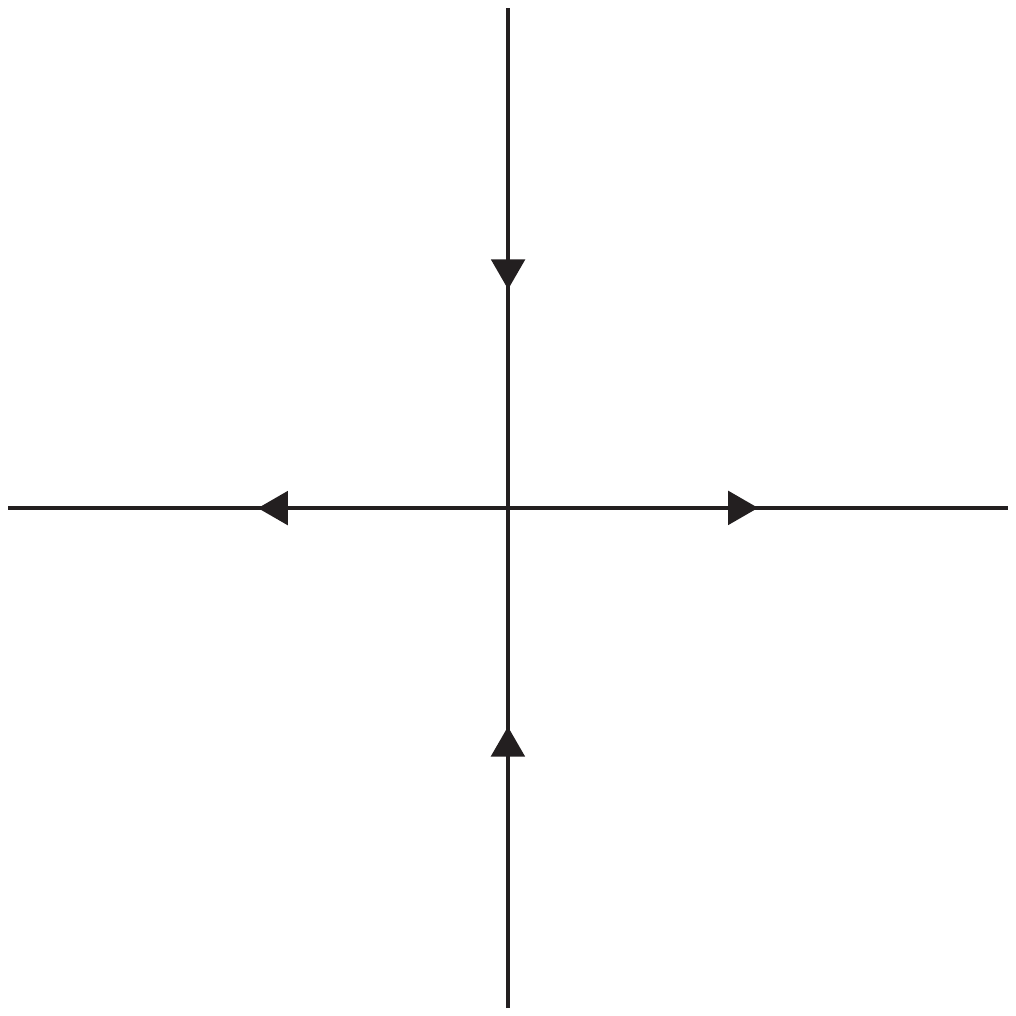}.
The boundary values of a function $f$ on a contour from the left and right will be denoted by $f_+$ and $f_-$, respectively. We will use $\{\sigma_j\}_1^3$ to denote the three standard Pauli matrices. The first and second columns of a $2 \times 2$ matrix $A$  will be denoted by $[A]_1$ and $[A]_2$, respectively.

\begin{figure}
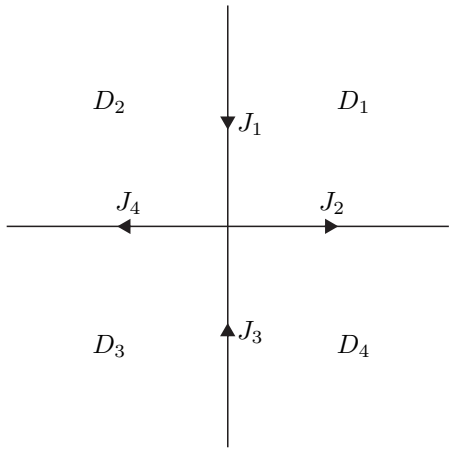

\begin{center}
\begin{overpic}[width=.4\textwidth]{jumps.pdf}
      \put(74,76){\small $D_1$} 
      \put(20,76){\small $D_2$} 
      \put(20,22){\small $D_3$} 
      \put(74,22){\small $D_4$} 
       \put(70, 53.5){\small $J_2$} 
      \put(52, 71){\small $J_1$} 
      \put(25, 53.5){\small $J_4$} 
      \put(52, 25){\small $J_3$} 
     \end{overpic}
    \caption{\label{jumps.pdf} The four open quadrants $D_1, \dots D_4$, the contour $\Sigma = \R \cup i\R$, and the jump matrices $J_1, \dots, J_4$ defined in (\ref{J1234def}).}
     \end{center}
\end{figure}

\section{The NLS equation on a finite interval}\label{intervalsec}
Before turning to the periodic problem, we recall some aspects of the analysis of the NLS equation on a finite interval presented in \cite{FI2004} which will be needed in later sections.


The NLS equation (\ref{NLS}) has a Lax pair given by
\begin{align}\label{laxpair}
  \mu_x+ik\hat{\sigma}_3 \mu=Q\mu, \qquad \mu_t+2ik^2 \hat{\sigma}_3 \mu=\tilde{Q}\mu,
\end{align}
where $k \in \C$ is the spectral parameter, $\mu(x,t,k)$ is a $2 \times 2$-matrix valued eigenfunction, the matrices $Q$ and $\tilde{Q}$ are defined in terms of the solution $q(x,t)$ of (\ref{NLS}) by
\begin{align}
Q= \begin{pmatrix} 0 & q \\ \lambda \bar{q} & 0 \end{pmatrix}, \qquad \tilde{Q}=2kQ-i Q_x\sigma_3-i\lambda |q|^2\sigma_3,
\end{align}
and $\hat{\sigma}_3\mu = [\sigma_3, \mu]$.
Suppose $q(x,t)$ is a smooth solution of (\ref{NLS}) defined for  $(x,t) \in [0,L] \times [0,T]$, where $0 < L < \infty$ and $0 < T < \infty$ is some fixed final time.
Following \cite{FI2004}, we let $\mu_j(x,t,k)$, $j = 1,2,3,4$, denote the four solutions of (\ref{laxpair}) which are normalized to be the identity matrix at the points $(0,T)$, $(0,0)$, $(L,0)$, and $(L,T)$, respectively. The spectral functions $a,b,A,B, \mathcal{A}, \mathcal{B}$ are defined for $k \in \C$ by
\begin{align}\label{abdef}
s(k)=\begin{pmatrix} \overline{a(\bar{k})} & b(k) \\ \lambda \overline{b(\bar{k})} & a(k) \end{pmatrix}, \quad
S(k)= \begin{pmatrix} \overline{A(\bar{k})} & B(k) \\ \lambda \overline {B(\bar{k})} & A(k) \end{pmatrix}, \quad 
S_L(k)= \begin{pmatrix} \overline{\mathcal{A}(\bar{k})} & \mathcal{B}(k) \\ \lambda \overline{\mathcal{B}(\bar{k})} & \mathcal{A}(k) \end{pmatrix},
\end{align}
where 
$$s(k)=\mu_3(0,0,k), \qquad S(k)=\mu_1(0,0,k), \qquad S_L(k)=\mu_4(L,0,k).$$ 
Note that $\{\mu_j\}_1^4$, $s$, $S$, and $S_L$ are entire functions of $k$. 
Clearly, $S(k)$ and $S_L(k)$ depend on $T$, whereas $s(k)$ does not. The entries of the matrices $s, S$, and $S_L$ are related by the so-called global relation (see \cite[Eq. (1.4)]{FI2004})
\begin{align}\label{gr}
(a\mathcal{A}+\lambda \bar{b} e^{2ikL}\mathcal{B})B-(b \mathcal{A}+\bar{a} e^{2ikL}\mathcal{B})A=e^{4ik^2T}c^+(k), \qquad k \in \C,
\end{align}
where $c^+(k)$ is an entire function which is of order $1/k$ as $k \to \infty$ in the upper half-plane; in fact
\begin{align}\label{cplusasymptotics}
c^+(k) = O\bigg(\frac{1}{k}\bigg) + O\bigg(\frac{e^{2ikL}}{k}\bigg), \qquad k \to \infty, \ k \in \C.
\end{align}
Here, and in what follows, a bar over a function denotes that the complex conjugate is taken, not only of the function but also of its argument; this is called the Schwarz conjugate.
The above functions satisfy the unit determinant relations 
\begin{align}\label{det1relations}
  a \bar{a} - \lambda b \bar{b} = 1, \qquad A \bar{A} - \lambda B \bar{B} = 1, \qquad \mathcal{A} \bar{\mathcal{A}} - \lambda \mathcal{B} \bar{\mathcal{B}} = 1.
\end{align}
The functions $a$ and $b$ satisfy
\begin{align}\label{abasymptotics}
a(k) = 1 + O\bigg(\frac{1}{k}\bigg) + O\bigg(\frac{e^{2ikL}}{k}\bigg), \quad 
b(k) = O\bigg(\frac{1}{k}\bigg) + O\bigg(\frac{e^{2ikL}}{k}\bigg), \qquad k \to \infty, \ k \in \C,
\end{align}
whereas the functions $A$ and $B$ satisfy
\begin{align}\label{ABasymptotics}
A(k) = 1 + O\bigg(\frac{1}{k}\bigg) + O\bigg(\frac{e^{4ik^2T}}{k}\bigg), \quad 
B(k) = O\bigg(\frac{1}{k}\bigg) + O\bigg(\frac{e^{4ik^2T}}{k}\bigg), \qquad k \to \infty, \ k \in \C.
\end{align}
The eigenfunctions $\mu_j$ are related by
\begin{subequations}\label{murelations}
\begin{align}\label{mu3mu2}
&  \mu_3 = \mu_2 e^{-i\theta\hat{\sigma}_3} s(k), 
 	\\ \label{mu1mu2}
&  \mu_1 = \mu_2 e^{-i\theta\hat{\sigma}_3} S(k), 
 	\\ \label{mu4mu3}
 & \mu_4 = \mu_3 e^{-i(\theta - kL)\hat{\sigma}_3} S_L(k),
\end{align}
\end{subequations}
where 
\begin{align}\label{thetadef}
\theta = \theta(x,t,k) = kx+2k^2 t
\end{align}
and $e^{-i\theta \hat{\sigma}_3}s(k) = e^{-i\theta \sigma_3}s(k) e^{i\theta\sigma_3}$ etc. Using the entries of $s$, $S$, $S_L$, we construct the following quantities:
\begin{align*}
& \alpha(k)= a \mathcal{A} + \lambda \bar{b} \mathcal{B} e^{2ikL}, && 
 \beta(k)= b \mathcal{A} + \bar{a} \mathcal{B} e^{2ikL}, 
 	\\
& d(k)=a \bar{A}-\lambda b \bar{B}, && 
 \delta(k)=\alpha \bar{A}-\lambda \beta \bar{B}.
\end{align*}

\subsection{RH problem for $M$}
It was shown in \cite{FI2004} that the sectionally meromorphic function  $M(x,t,k)$  defined by
\begin{align}\label{Mdef}
M = \begin{cases} 
(\frac{[\mu_2]_1}{\alpha}, [\mu_4]_2), \quad & k \in D_1, \\
(\frac{[\mu_1]_1}{d}, [\mu_3]_2), & k \in D_2, \\
([\mu_3]_1, \frac{[\mu_1]_2}{\bar{d}}), & k \in D_3, \\
([\mu_4]_1, \frac{[\mu_2]_2}{\bar{\alpha}}), & k \in D_4,
\end{cases}
\end{align}
satisfies
\begin{subequations}\label{rhp}
\begin{align}\label{rhpa}
& M_-(x,t,k)=M_+(x,t,k)J(x,t,k),\qquad k \in \Sigma,
	\\
& M(x,t,k) = I + O(1/k), \qquad k \to \infty,
\end{align}
\end{subequations}
where $\Sigma = \R \cup i\R$ is oriented as in Figure \ref{jumps.pdf} and the jump matrix $J$ is defined by 
\begin{align}\label{J}
J=
\begin{cases}
J_2, & \arg k=0,\\
J_1, & \arg k=\pi/2,\\
J_4, & \arg k=\pi ,\\
J_3, & \arg k=3\pi/2,
\end{cases}
\end{align}
with
\begin{align}\nonumber
&J_1= \begin{pmatrix} \delta/d & -\mathcal{B}e^{2ikL}e^{-2i\theta} \\ \lambda \bar{B} e^{2i\theta}/(d\alpha) & a/\alpha \end{pmatrix}, && 
 J_2= \begin{pmatrix} 1 & -\beta e^{-2i\theta}/\bar{\alpha} \\ \lambda \bar{\beta} e^{2i\theta}/\alpha & 1/(\alpha \bar{\alpha}) \end{pmatrix},
	\\  \label{J1234def}
& J_3= \begin{pmatrix} \bar{\delta}/\bar{d} & -Be^{-2i\theta}/\bar{d}\bar{\alpha} \\ \lambda \bar{\mathcal{B}} e^{-2ikL} e^{2i\theta} & \bar{a}/\bar{\alpha} \end{pmatrix}, 
&&
 J_4=J_3 J_2^{-1} J_1.
\end{align}
Note that the jump matrix depends on $x$ and $t$ only via the function $\theta(x,t,k)$ defined in (\ref{thetadef}).  
The solution $q(x,t)$ of (\ref{NLS}) can be recovered from  $M$ via the identity
\begin{align}\label{sol}
q(x,t)=2i\lim_{k\rightarrow \infty} k M_{12}(x,t,k),
\end{align}
where the limit may be taken in any quadrant.

If the functions $\alpha(k)$ and $d(k)$ have no zeros, then the function $M$ is analytic for $k \in \C \setminus \Sigma$ and it can be characterized as the unique solution of the RH problem (\ref{rhp}) with jump matrix $J$. The jump matrix $J$  depends via the spectral functions on the initial datum  $q(x,0)$ as well as on the boundary values $q(0,t), q(L,t), q_x(0,t)$, and $q_x(L, t)$. If all these boundary values are known, then the value of $q(x,t)$ at any point $(x,t)$  can be obtained by solving the RH problem (\ref{rhp}) for $M$ and using (\ref{sol}).

If the functions $\alpha(k)$ and $d(k)$ have zeros, then $M$ may have pole singularities. The generic case of a finite number of simple poles can be treated by supplementing the RH problem with appropriate residue conditions, see \cite[Proposition 2.3]{FI2004}.


\section{The periodic problem}\label{periodicsec}
From now on, we restrict our attention to the periodic problem and assume that $q(x,t)$ satisfies
\begin{align}\label{qboundaryvalues}
q(0,t) = q(L,t) \quad \text{and} \quad q_x(0,t) = q_x(L,t) \quad \text{for} \quad  t \in [0, T].
\end{align}
In this case, we clearly have $\mathcal{A} = A$ and $\mathcal{B} = B$. Hence the jump matrices $J_i$ defined in (\ref{J1234def}) become
\begin{align}\nonumber
&J_1= \begin{pmatrix} \delta/d & -Be^{2ikL}e^{-2i\theta} \\ \lambda \bar{B} e^{2i\theta}/(d\alpha) & a/\alpha, \end{pmatrix} 
&&
J_2= \begin{pmatrix} 1 & -\beta e^{-2i\theta}/\bar{\alpha} \\ \lambda \bar{\beta} e^{2i\theta}/\alpha & 1/(\alpha \bar{\alpha}) \end{pmatrix},
	\\ \label{periodicJ1234def}
& J_3= \begin{pmatrix} \bar{\delta}/\bar{d} & -Be^{-2i\theta}/(\bar{d}\bar{\alpha}) \\ \lambda \bar{B} e^{-2ikL} e^{2i\theta} & \bar{a}/\bar{\alpha}
\end{pmatrix}, && 
J_4=J_3 J_2^{-1} J_1,
\end{align}
where
\begin{align}
\alpha = a A + \lambda \bar{b} B e^{2ik L}, \quad 
\beta = b A + \bar{a} B e^{2ikL}, \quad
d = a \bar{A} - \lambda b \bar{B}, \quad
\delta = \alpha\bar{A} - \lambda \beta \bar{B},
\end{align}
and the global relation (\ref{gr}) becomes
\begin{align}\label{periodicgr}
\lambda \bar{b} e^{2ikL} B^2 + (a - \bar{a} e^{2ikL})A B - b A^2 = e^{4ik^2 T} c^+.
\end{align}

Our goal is to find a representation for the solution $q(x,t)$ in terms of the initial datum $q_0(x) = q(x,0)$. Since the expression (\ref{periodicJ1234def}) for the jump matrix depends via the spectral functions $A(k)$  and $B(k)$ on the two (unknown) functions $q(0,t)$ and $q_x(0,t)$ given in (\ref{qboundaryvalues}), the representation (\ref{sol}) does not achieve this goal. 
However, in what follows we will show that it is possible to eliminate $A$  and $B$ from the formulation of the RH problem by performing two steps. In the first step, which is presented in this section, we will transform the RH problem (\ref{rhp}) to a new RH problem which depends on the unknown boundary values in (\ref{qboundaryvalues}) only via the quotient 
\begin{align}\label{Gammadef}
\Gamma(k) = \frac{B(k)}{A(k)}.
\end{align}
In the second step, which is presented in Section \ref{mainsec}, we will show that $\Gamma(k)$ can be effectively replaced by a function $\tilde{\Gamma}(k)$, which only depends on the initial datum. 


\subsection{RH problem for $m$}
Consider the sectionally meromorphic function $m(x,t,k)$ defined in terms of the eigenfunctions $\mu_j(x,t,k)$, $j = 1,\dots, 4$, by
\begin{align}\label{mdef}
m = \begin{cases} 
(\frac{A[\mu_2]_1}{\alpha}, \frac{[\mu_4]_2}{A}), \quad & k \in D_1, \\
(\frac{[\mu_1]_1}{d}, [\mu_3]_2), & k \in D_2, \\
([\mu_3]_1, \frac{[\mu_1]_2}{\bar{d}}), & k \in D_3, \\
(\frac{[\mu_4]_1}{\bar{A}}, \frac{\bar{A} [\mu_2]_2}{\bar{\alpha}}), & k \in D_4.
\end{cases}
\end{align}
The function $m$  is related to the solution $M$ of (\ref{rhp}) by 
$$m = MH,$$
where the sectionally meromorphic function $H$ is defined by
$$H_1 = \begin{pmatrix} A & 0 \\ 0 & 1/A \end{pmatrix}, \qquad
H_2 = H_3 = I,\qquad
H_4 = \begin{pmatrix} 1/\bar{A} & 0 \\ 0 & \bar{A} \end{pmatrix};$$
here $H_j$ denotes the restriction of $H$ to $D_j$, $j = 1,2,3,4$. 
The function $m$ may have poles at the possible zeros of the entire functions $\alpha$, $A$, and $d$. In order to express the locations of these possible poles and the associated residue conditions in terms of only $a,b, \Gamma$, we introduce the functions $\eta(k)$ and $\xi(k)$ which are defined by
\begin{align}\label{etaxidef}
\eta(k) = 
- \frac{ \lambda \bar{\beta} A^2}{\alpha}
= \frac{ \lambda (\bar{b} + a \bar{\Gamma} e^{-2ikL})}{(a + \lambda \bar{b} \Gamma e^{2ikL})(\lambda \Gamma \bar{\Gamma} - 1)}, \qquad 
\xi(k) = \frac{\lambda (\bar{a}\bar{B} - \bar{b} \bar{A})}{d}
= \frac{\lambda(\bar{a} \bar{\Gamma} - \bar{b})}{a  - \lambda b\bar{\Gamma}}.
\end{align}
Clearly, $\Gamma$, $\eta$, and $\xi$ are meromorphic functions of $k \in \C$. We will consider the generic situation in which the possible poles of these functions satisfy the following assumption.

\begin{assumption}\label{poleassumption}
We assume the following:
\begin{enumerate}[$-$]
\item $\Gamma$, $\eta$, and $\xi$ have no poles on the contour $\Sigma = \R \cup i\R$.

\item In $D_1$, $\Gamma(k)$ has at most finitely many poles $\{k_j\}_1^{N_1} \subset D_1$ and these poles are all simple.

\item In $D_1$, $\eta(k)$ has at most finitely many poles $\{K_j\}_1^{N_2} \subset D_1$ and these poles are all simple and disjoint from the poles of $\Gamma$.

\item In $D_2$, $\xi(k)$ has at most finitely many poles $\{\kappa_j\}_1^{N_3} \subset D_2$ and these poles are all simple.

\item Let $P = \{k_j, \bar{k}_j\}_1^{N_1} \cup \{K_j, \bar{K}_j\}_1^{N_2} \cup \{\kappa_j, \bar{\kappa}_j\}_1^{N_3}$ denote the set of poles and their complex conjugates. Then $P$ is disjoint from the set of zeros of $a$ and $b$.
\end{enumerate}
\end{assumption}

\begin{remark}
Regarding Assumption \ref{poleassumption} it is noted that it is {\it not} important to investigate whether these poles exist, since, remarkably, they cancel out in the formulation of the final RH problem. Actually, as shown in the main theorem (Theorem \ref{mainth}), we only need to worry about the poles of $\tilde{\Gamma}$ and the existence of these poles is discussed in Remark \ref{polesremark}.
\end{remark}

We will show that $m$ satisfies the following RH problem.

\begin{RHproblem}{\bf(The RH problem for $m$)}\label{RHm}
Find a $2 \times 2$-matrix valued function $m(x,t,k)$ with the following properties:
\begin{itemize}
\item $m(x,t,\cdot) : \C \setminus (\Sigma \cup P) \to \C^{2 \times 2}$ is analytic.

\item The limits of $m(x,t,k)$ as $k$ approaches $\Sigma \setminus \{0\}$ from the left and right exist, are continuous on $\Sigma \setminus \{0\}$, and satisfy
\begin{align}\label{mjump}
  m_-(x,t,k) = m_+(x, t, k) v(x, t, k), \qquad k \in \Sigma \setminus \{0\},
\end{align}
where the jump matrix $v$ is defined by
\begin{align}\nonumber
&  v_1 = \begin{pmatrix}  \frac{a  - \lambda b \bar{\Gamma} - \lambda \Gamma (\bar{a}\bar{\Gamma} - \bar{b})e^{2ikL} }{a- \lambda b \bar{\Gamma}} & -\Gamma  e^{2ikL} e^{-2 i \theta } \\
 \frac{\lambda \bar{\Gamma} e^{2 i \theta } }{(a - \lambda b \bar{\Gamma}) (a + \lambda\bar{b} \Gamma  e^{2ikL})} & \frac{a}{a + \lambda \bar{b} \Gamma  e^{2ikL}} \end{pmatrix}, && \arg k = \frac{\pi}{2},
	\\ \nonumber
& v_2 =  \begin{pmatrix} 1 - \lambda \Gamma \bar{\Gamma} 
& -\frac{(\bar{a} \Gamma e^{2ikL}+b) e^{-2 i \theta }}{\bar{a}+ \lambda b \bar{\Gamma} e^{-2ikL}} \\
 \frac{\lambda (a \bar{\Gamma}e^{-2ikL}+\bar{b}) e^{2 i \theta } }{a+ \lambda \bar{b} \Gamma  e^{2ikL} } &
   \frac{1}{(a+ \lambda \bar{b} \Gamma  e^{2ikL}) (\bar{a} + \lambda b \bar{\Gamma} e^{-2ikL})}  \end{pmatrix},&& \arg k = 0,
   	\\  \nonumber
& v_3 = \begin{pmatrix} \frac{\bar{a}  - \lambda \bar{b} \Gamma - \lambda \bar{\Gamma}(a\Gamma - b) e^{-2ikL}}{\bar{a} - \lambda \bar{b} \Gamma } & - \frac{\Gamma e^{-2 i \theta}}{(\bar{a} - \lambda \bar{b} \Gamma) (\bar{a} + \lambda b \bar{\Gamma} e^{-2ikL} )} \\
 \lambda \bar{\Gamma} e^{-2ikL} e^{2 i \theta} & \frac{\bar{a}}{\bar{a} + \lambda b \bar{\Gamma} e^{-2ikL}}  \end{pmatrix}, && \arg k = -\frac{\pi}{2},
   	\\ \label{vdef}
& v_4 = \begin{pmatrix}	
 \frac{1 - \lambda \Gamma \bar{\Gamma}}{(a - \lambda b \bar{\Gamma}) (\bar{a} - \lambda \bar{b} \Gamma )} & -\frac{(a \Gamma - b)e^{-2 i \theta } }{\bar{a} - \lambda \bar{b} \Gamma } \\
 \frac{\lambda(\bar{a} \bar{\Gamma} - \bar{b})e^{2 i \theta }}{a - \lambda b \bar{\Gamma}} & 1
 \end{pmatrix}, && \arg k = \pi.
\end{align}

\item $m(x,t,k) = I + O(k^{-1})$ as $k \to \infty$.

\item $m(x,t,k) = O(1)$ as $k \to 0$.

\item At the points $k_j \in D_1$  and  $\bar{k}_j \in D_4$, $m$ satisfies, for $j = 1, \dots, N_1$,
\begin{subequations}\label{Gammaresidues}
\begin{align}\label{Gammaresiduesa}
[m(x,t,k)]_1 = O(k-k_j), \quad
[m(x,t,k)]_2 = O\bigg(\frac{1}{k-k_j}\bigg), \qquad k \to k_j,
	\\\label{Gammaresiduesb}
[m(x,t,k)]_1 = O\bigg(\frac{1}{k-\bar{k}_j}\bigg), \quad
[m(x,t,k)]_2 = O(k-\bar{k}_j),  \qquad k \to \bar{k}_j.
\end{align}
\end{subequations}

\item At the points $K_j \in D_1$  and  $\bar{K}_j \in D_4$, $m$ has at most simple poles and the residues at these poles satisfy, for $j = 1, \dots, N_2$,
\begin{subequations}\label{etaresidues}
\begin{align}\label{etaresiduesa}
& \underset{k = K_j}{\res} [m(x,t,k)]_1 =  [m(x,t,K_j)]_2 e^{2i\theta(x,t,K_j)} \underset{k=K_j}{\res} \eta(k),
	\\\label{etaresiduesb}
& \underset{k=\bar{K}_j}{\res} [m(x,t,k)]_2 =  \lambda [m(x,t,\bar{K}_j)]_1 e^{-2i\theta(x,t,\bar{K}_j)} \overline{\underset{k=K_j}{\res} \eta(k)}.	
\end{align}
\end{subequations}

\item At the points $\kappa_j \in D_2$  and  $\bar{\kappa}_j \in D_3$, $m$ has at most simple poles and the residues at these poles satisfy, for $j = 1, \dots, N_3$,
\begin{subequations}\label{xiresidues}
\begin{align}\label{xiresiduesa}
& \underset{k=\kappa_j}{\res} [m(x,t,k)]_1 =  [m(x,t,\kappa_j)]_2 e^{2i\theta(x,t,\kappa_j)} \underset{k=\kappa_j}{\res} \xi(k),
	\\\label{xiresiduesb}
& \underset{k=\bar{\kappa}_j}{\res} [m(x,t,k)]_2 = \lambda [m(x,t,\bar{\kappa}_j)]_1 e^{-2i\theta(x,t,\bar{\kappa}_j)} \overline{\underset{k=\kappa_j}{\res} \xi(k)}.
\end{align}
\end{subequations}

\end{itemize}
\end{RHproblem}

Note that the above RH problem depends on the functions $A$ and $B$ only via their quotient $\Gamma = B/A$. The next proposition shows that the RH problem for $m$ can be used to determine the solution $q$ of (\ref{NLS}) on the circle of length $L$, assuming that the initial datum and the quotient $B/A$ are known.

\begin{proposition}\label{mprop}
Let $0 < T < \infty$. Suppose $q(x,t)$ is a smooth solution of (\ref{NLS}) for $(x,t) \in \R \times [0,T]$ which is $x$-periodic of period $L > 0$, i.e., $q(x+L, t) = q(x,t)$.
Define $a,b, \Gamma$ by (\ref{abdef}) and (\ref{Gammadef}) and let $\eta(k)$  and $\xi(k)$ be given by (\ref{etaxidef}). 
Suppose Assumption \ref{poleassumption} holds. 
Then the RH problem \ref{RHm} has a unique solution $m(x,t,k)$ for each $(x,t) \in [0,L] \times [0,T]$. The solution $q$ can be obtained from  $m$ via the relation
\begin{align}\label{recoverqm}
  q(x,t) = 2i\lim_{k \to \infty} k m_{12}(x,t,k), \qquad (x,t) \in [0,L] \times [0,T].
\end{align}
Moreover, $\det m = 1$ and $m$ obeys the symmetries
\begin{align}\label{msymm}
  m_{11}(x,t,k) = \overline{m_{22}(x,t,\bar{k})}, \qquad 
  m_{21}(x,t,k) = \lambda \overline{m_{12}(x,t,\bar{k})}.
\end{align}
\end{proposition}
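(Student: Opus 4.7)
The plan is to view $m$ as a transformation of $M$ via $m = MH$, where $M$ is the solution of the RH problem (\ref{rhp}) from Section \ref{intervalsec} and $H$ is the explicit sectionally meromorphic matrix defined just before the statement of the RH problem for $m$. Since the existence, uniqueness, determinant normalization, symmetries, and recovery formula have already been established for $M$ in \cite{FI2004}, the task reduces to transferring these properties along the transformation $H$ and identifying what jumps and residues $m$ must satisfy. Periodicity enters only through the substitution $\mathcal{A} = A$, $\mathcal{B} = B$, which brings the spectral data down to $a, b, A, B$, and then $H$ is designed to strip the $A$-dependence away from the rows of $M$ so that only the quotient $\Gamma = B/A$ remains in the formulas.

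The first step is the direct computation of the jump. On each ray of $\Sigma$, writing $v = H_+^{-1} J H_-$ with the orientation of Figure \ref{jumps.pdf} (so that, for example, on $\arg k = \pi/2$ one has $H_+ = H_2 = I$ and $H_- = H_1 = \mathrm{diag}(A, 1/A)$) reduces $J_1, J_2, J_3, J_4$ from (\ref{periodicJ1234def}) to the four matrices $v_1, v_2, v_3, v_4$ in (\ref{vdef}). The definitions of $\alpha, \beta, d, \delta$ in terms of $a, b, A, B$ let one factor out $A$ and $\bar A$ systematically, and using the unit-determinant relation $A\bar A - \lambda B \bar B = 1$ (equation (\ref{det1relations})) one eliminates the remaining factors of $A$ in favor of $\Gamma$. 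This is routine but long; I would verify one of the four entries of $v_2$ in detail as a sanity check and leave the rest as direct computation.

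The second step handles asymptotics and uniqueness. From (\ref{ABasymptotics}), $A(k) \to 1$ as $k \to \infty$, so $H(k) \to I$, and therefore $m = MH \to I$; the behaviour $m = O(1)$ at $k = 0$ follows from the corresponding bound on $M$ and the boundedness of $H_j$ at the origin. The determinant identity $\det m = 1$ is immediate since $\det H_j = 1$ for each $j$ (the diagonal entries of each $H_j$ are reciprocals) and $\det M = 1$. The symmetries (\ref{msymm}) follow from the known symmetries of the $\mu_j$ combined with the fact that $H_4$ is the Schwarz conjugate of $H_1$ in the required sense. Uniqueness is obtained by the standard vanishing lemma, using that $\det v = 1$ on $\Sigma$ and the Schwarz symmetry of the jumps to rule out a nontrivial solution with $I$ subtracted. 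The recovery formula (\ref{recoverqm}) follows from (\ref{sol}) because $[H_j]_{12} = 0$, so $m_{12}$ and $M_{12}$ have the same leading $1/k$ coefficient.

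The main obstacle, and the step I would devote the most care to, is the identification of the residue structure of $m$ purely in terms of the meromorphic functions $\Gamma$, $\eta$, $\xi$ defined in (\ref{Gammadef})--(\ref{etaxidef}). The candidate poles of $m$ come from three sources: zeros of $A$ (which make the second column of $m|_{D_1}$ singular and the first column vanish, hence the cancellation conditions (\ref{Gammaresidues})), zeros of $\alpha$ in $D_1$ (which produce the residue conditions (\ref{etaresidues}) after expressing $\res \frac{1}{\alpha}$ through the coupling between $[\mu_2]_1$ and $[\mu_4]_2$ via (\ref{mu3mu2})--(\ref{mu4mu3}) specialized to the periodic case), and zeros of $d$ in $D_2$ (which produce (\ref{xiresidues}) via the analogous relation between $[\mu_1]_1$ and $[\mu_3]_2$). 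In each case one uses the relevant column identity $[\mu_j]_1 = \cdot [\mu_\ell]_2 e^{\pm 2i\theta} + \cdots$ on the zero set, together with (\ref{det1relations}), to rewrite the residue coefficient as $\res \eta$ or $\res \xi$, which by construction depend only on $a, b, \Gamma$. Assumption \ref{poleassumption} guarantees that all these poles are simple, lie off $\Sigma$, and are distinct from zeros of $a, b$, so the residue calculus goes through cleanly and the conjugate residues in the lower half plane follow from the Schwarz symmetry.
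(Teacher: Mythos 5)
Your proposal follows essentially the same route as the paper: define $m = MH$, obtain the jump matrices $v_j$ by conjugating the $J_j$ with $H$ and eliminating $A$ via $A\bar A - \lambda B\bar B = 1$, and derive the analyticity and residue conditions from the column relations among the $\mu_j$ (poles of $\Gamma$, $\eta$, $\xi$ coming from zeros of $A$, $\alpha$, $d$ respectively), with the lower half-plane handled by Schwarz symmetry. Two small points to repair: on $i\R_+$, which is oriented toward the origin, the $+$ (left) side is $D_1$, so the correct identity is $v_1 = H_1^{-1} J_1 H_2$ rather than $H_2^{-1} J_1 H_1$ as your parenthetical example suggests; and the ``standard vanishing/Liouville'' uniqueness argument requires first removing the points of $P$ --- e.g.\ by multiplying $m$ on the right by $\mathrm{diag}\bigl(\prod_j \tfrac{k-\bar k_j}{k-k_j},\, \prod_j \tfrac{k-k_j}{k-\bar k_j}\bigr)$ and regularizing the residue conditions (\ref{etaresidues})--(\ref{xiresidues}) in the usual way --- before comparing two solutions via $m n^{-1}$.
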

\begin{proof}
Let us first show that the solution of the RH problem \ref{RHm} is unique. 
If the set $P$ of poles is empty, then uniqueness follows by standard considerations because the jump matrix has unit determinant. 
The problem with a nonempty set $P$ can be transformed into a problem for which $P$ is empty. Indeed, if the set $\{k_j\}_1^{N_1}$ of poles of $\Gamma$ is nonempty, then the function
\begin{align}
m \begin{pmatrix} \prod_{j=1}^{N_1} \frac{k- \bar{k}_j}{k-k_j} & 0 \\ 0 & \prod_{j=1}^{N_1} \frac{k-k_j}{k- \bar{k}_j} \end{pmatrix}
\end{align}
satisfies an analogous RH problem but with no singularities at the points $\{k_j, \bar{k}_j\}_1^{N_1}$.
The residue conditions (\ref{etaresidues}) and (\ref{xiresidues}) are of a standard form, so the possible poles $\{K_j\}_1^{N_2}$ and $\{\kappa_j\}_1^{N_3}$ of $\eta$ and $\xi$  can  be regularized in the standard way, see e.g. \cite{FI1996}. This proves uniqueness. It also follows from these arguments that $\det m = 1$. 

The symmetries (\ref{msymm}) can be expressed as $m(x,t,k) = \sigma_1 \overline{m(x,t,\bar{k})} \sigma_1$ if $\lambda = 1$ and as $m(x,t,k) = \sigma_3 \sigma_1 \overline{m(x,t,\bar{k})}\sigma_1 \sigma_3$ if $\lambda = -1$.
These symmetries follow from the uniqueness of the solution and the fact that $v$ obeys the symmetries
\begin{align}\label{vsymm}
v(x,t,k) = \begin{cases}
\sigma_1 \overline{v(x,t,\bar{k})}^{-1} \sigma_1, & \lambda = 1,\\
\sigma_3 \sigma_1 \overline{v(x,t,\bar{k})}^{-1} \sigma_1 \sigma_3, & \lambda = -1.
 \end{cases}
 \end{align}

Define $m$ by (\ref{mdef}). 
Long but straightforward computations using the unit determinant relations (\ref{det1relations}) show that the jump matrix $v$ given in (\ref{vdef}) satisfies
$$v_1 = H_1^{-1} J_1 H_2, \quad
v_2 = H_1^{-1} J_2 H_4, \quad
v_3 = H_3^{-1} J_3 H_4, \quad
v_4 = H_3^{-1} J_4 H_2.$$
Since $m = MH$, the jump relation (\ref{mjump}) follows from (\ref{rhpa}).
Alternatively, (\ref{mjump}) can be derived directly from (\ref{murelations}) and (\ref{mdef}). 

We next show that $m$ is analytic for $k \in \C \setminus (\Sigma \cup P)$ and establish the residue conditions (\ref{Gammaresidues})-(\ref{xiresidues}). We consider the columns of $m$ in each quadrant separately.

{\bf The second column of $m$ in $D_1$.}
The condition $A \bar{A} - \lambda B \bar{B} = 1$ implies that $A$ and  $B$ cannot have any common zeros. Hence, the set of zeros of $A$ coincides with the set of poles of $\Gamma$, and at each such pole $B$ is nonzero.  
Since $[m]_2 = [\mu_4]_2/A$, it follows that $[m]_2$  is analytic in $D_1$ except at the possible poles $k_j$ of $\Gamma$.

Equations (\ref{mu3mu2}) and (\ref{mu4mu3}) imply that
\begin{align}\label{mu4mu2}
  \mu_4 = \mu_2 \begin{pmatrix} \bar{\alpha} & \beta e^{-2i\theta} \\ 
  \lambda \bar{\beta} e^{2i\theta} & \alpha \end{pmatrix}.
\end{align}
The second column of (\ref{mu4mu2}) is
\begin{align}\label{mu42mu21}
[\mu_4]_2 = [\mu_2]_1 \beta e^{-2i\theta} + [\mu_2]_2 \alpha.
\end{align}
Dividing by $A$, this becomes, for $k \in D_1$,
\begin{align}\nonumber
[m]_2 & = [\mu_2]_1 \frac{\beta}{A} e^{-2i\theta} + [\mu_2]_2 \frac{\alpha}{A}
= [m]_1 \frac{\alpha \beta}{A^2} e^{-2i\theta} + [\mu_2]_2 \frac{\alpha}{A}
	\\ \label{m2inD1}
& = [m]_1 (b + \bar{a}\Gamma e^{2ikL})(a + \lambda \bar{b} \Gamma e^{2ikL}) e^{-2i\theta} + [\mu_2]_2 (a + \lambda \bar{b} \Gamma e^{2ikL}).
\end{align}
Since $\bar{b}$ is nonzero at each point of $P$ by assumption and since $A(k_j) = 0$ and $B(k_j) \neq 0$, it follows that $\alpha(k_j) = \lambda \overline{b(\bar{k}_j)} B(k_j) e^{2ik_jL} \neq 0$. 
Thus $[m]_1 = A[\mu_2]_1/\alpha$ has at least a simple zero at $k_j$ and it then follows from (\ref{m2inD1}) that $[m]_2$ has (at most) a simple pole at $k_j$. 
This proves (\ref{Gammaresiduesa}).

{\bf The first column of $m$ in $D_1$.}
Equation (\ref{mu4mu2}) can be written as 
\begin{align}\label{mu2mu4}
  \mu_2 = \mu_4 \begin{pmatrix} \alpha & -\beta e^{-2i\theta} \\ 
  -\lambda \bar{\beta} e^{2i\theta} & \bar{\alpha} \end{pmatrix},
\end{align}
and the first column of (\ref{mu2mu4}) yields
$[\mu_2]_1 = [\mu_4]_1 \alpha - [\mu_4]_2  \lambda \bar{\beta} e^{2i\theta}$.
Thus, for $k \in D_1$, 
\begin{align}\label{m1inD1}
[m]_1 = [\mu_4]_1 A - [\mu_4]_2 \frac{ \lambda \bar{\beta} A e^{2i\theta}}{\alpha}
= [\mu_4]_1 A - [m]_2 \frac{ \lambda \bar{\beta} A^2 e^{2i\theta}}{\alpha}
= [\mu_4]_1 A + [m]_2 \eta e^{2i\theta}.
\end{align}
We already saw that $[m]_1$ vanishes at each possible pole $k_j$ of $[m]_2$. 
It follows that $[m]_1$ is analytic in $D_1$ except at the possible simple poles $K_j$ of $\eta$, and at each such pole the residue condition (\ref{etaresiduesa}) holds.

{\bf The second column of $m$ in $D_2$.}
This column is analytic in $D_2$, because $\mu_3$  is entire.

{\bf The first column of $m$ in $D_2$.}
Equations (\ref{mu3mu2}) and (\ref{mu1mu2}) imply 
\begin{align}\label{mu1mu3}
  \mu_1 = \mu_3 e^{-i\theta\hat{\sigma}_3} (s^{-1} S), \quad \text{where} \quad
  s^{-1} S = \begin{pmatrix}
d & aB - bA \\ \lambda(\bar{a}\bar{B} - \bar{b} \bar{A}) & \bar{d} 
\end{pmatrix}.
\end{align}
The first column of (\ref{mu1mu3}) can be written as
\begin{align}\label{mu11mu31}
[\mu_1]_1 = [\mu_3]_1 d + [\mu_3]_2 \lambda(\bar{a}\bar{B} - \bar{b} \bar{A}) e^{2i\theta}.\end{align}
Dividing by $d$, this becomes, for $k \in D_2$,
\begin{align}\label{m1inD2}
[m]_1 = [\mu_3]_1 + [\mu_3]_2 \frac{\lambda(\bar{a}\bar{B} - \bar{b} \bar{A})}{d} e^{2i\theta}
= [\mu_3]_1 + [m]_2 \xi e^{2i\theta}.
\end{align}
It follows that $[m]_1$ is analytic in $D_2$ except at the simple poles $\kappa_j$ of $\xi$, and at each such pole the residue condition (\ref{xiresiduesa}) holds.

This completes the proof of the analyticity of $m$ and the residue conditions in the upper half-plane.
The analogous properties in the lower half-plane follow from similar arguments or from the symmetries (\ref{msymm}). 
Together with Assumption \ref{poleassumption}, equations (\ref{m2inD1}), (\ref{m1inD1}), and (\ref{m1inD2}) also imply that $m$ has continuous boundary values on $\Sigma \setminus \{0\}$ and that $m = O(1)$ as $k \to 0$. 
The behavior of $m$ as $k \to \infty$  is a consequence of (\ref{mdef}). Finally, (\ref{recoverqm}) follows from (\ref{sol}).
\end{proof}

\section{Main result}\label{mainsec}
The RH problem \ref{RHm} for $m$ depends on the final time $T$ via the function  $\Gamma = B/A$. However, the solution $q(x,t)$ is independent of $T$. This suggests that it should be possible to eliminate the $T$-dependence from the RH problem \ref{RHm}.
In this section, we define a new $T$-independent RH problem---henceforth called the RH problem for $\tilde{m}$---by applying an appropriate deformation to the RH problem for $m$. The basic idea of this deformation is to replace $\Gamma$ by a new $T$-independent function  $\tilde{\Gamma}$. Since $\tilde{\Gamma}$ is defined in terms of the spectral functions $a(k)$ and $b(k)$ alone, 
this will lead to our main result.

\subsection{Motivating remarks}
As motivation for the definition of the RH problem for $\tilde{m}$, we temporarily consider a solution $q(x,t)$ of the NLS equation on the interval $[0,L]$ whose boundary values have decay as $t \to \infty$. In this case, it can be shown that the functions $A$, $B$, $\mathcal{A}$, and $\mathcal{B}$ have finite limits as $T \to \infty$; we denote these limits by $\tilde{A}$, $\tilde{B}$, $\tilde{\mathcal{A}}$, and $\tilde{\mathcal{B}}$:
$$\tilde{A} = \lim_{T\to \infty} A, \qquad \tilde{B} = \lim_{T\to \infty} B, \qquad \tilde{\mathcal{A}} = \lim_{T\to \infty} \mathcal{A}, \qquad \tilde{\mathcal{B}} = \lim_{T\to \infty} \mathcal{B}.$$
Taking the same limit in the global relation (\ref{gr}) leads to the equation
\begin{align}\label{tildegr}
(a\tilde{\mathcal{A}} + \lambda \bar{b} e^{2ikL}\tilde{\mathcal{B}})\tilde{B} - (b \tilde{\mathcal{A}} + \bar{a} e^{2ikL}\tilde{\mathcal{B}})\tilde{A} = 0, \qquad k \in D_1 \cup D_3,
\end{align}
which can be viewed as a relation between the two quotients $\tilde{B}/\tilde{A}$ and $\tilde{\mathcal{B}}/\tilde{\mathcal{A}}$. In the spatially periodic setting, these two quotients are equal and hence (\ref{tildegr}) can be solved for $\tilde{B}/\tilde{A}$ with the result that
\begin{align}\label{tildeGammamotivation}
\frac{\tilde{B}}{\tilde{A}} = \frac{\lambda}{2 e^{ikL} \bar{b}} \Big(\bar{a} e^{ikL} - a e^{-ikL} \pm i\sqrt{4 - \Delta^2}\Big),
\end{align}
where $\Delta = a e^{-ikL}+\bar{a} e^{ikL}$.
This suggests that we define the new RH problem for $\tilde{m}$ by replacing $\Gamma$ with the $T$-independent function $\tilde{\Gamma}$, where $\tilde{\Gamma} \equiv \tilde{B}/\tilde{A}$ is defined by the right-hand side of (\ref{tildeGammamotivation}). Since the right-hand side of (\ref{tildeGammamotivation}) is defined in terms of the initial datum alone, this leads to an effective solution of the $x$-periodic initial value problem. 

We first need to give a careful definition of $\tilde{\Gamma}$ in which the branch of the square root in (\ref{tildeGammamotivation}) is fully specified.

\subsection{Definition of $\tilde{\Gamma}$}
We define the function $\tilde{\Gamma}(k)$ by 
\begin{align}\label{tildeGammadef}
\tilde{\Gamma} = \frac{\lambda}{2 e^{ikL} \bar{b}} \Big(\bar{a} e^{ikL} - a e^{-ikL} - i\sqrt{4 - \Delta^2}\Big),
\end{align}
where $\Delta(k)$ is given by 
\begin{align}\label{Deltadef}
  \Delta = a e^{-ikL}+\bar{a} e^{ikL}, \qquad k \in \C.
\end{align}

\begin{remark}
The function $\tilde{\Gamma}(k)$ defined in (\ref{tildeGammadef}) is related to the classical Titchmarsh--Weyl function appearing in the spectral theory of the Hill operator.
\end{remark}

In order to make the definition (\ref{tildeGammadef}) of $\tilde{\Gamma}$ precise, we need to introduce branch cuts and fix the sign of $\sqrt{4 - \Delta^2}$. 
To this end, consider the zero-set $\mathcal{P}$ of the entire function $4 - \Delta^2$:
$$\mathcal{P} = \{k \in \C \, | \, 4 - \Delta(k)^2 = 0 \}.$$
The function $\Delta(k)$ is the trace of the so-called monodromy matrix and features heavily in the classical approach to the $x$-periodic problem. In that framework, $\mathcal{P}$ is the periodic spectrum\footnote{The periodic spectrum is defined as the union of all periodic and antiperiodic eigenvalues. 
} and is well studied. In what follows, we collect some well-known facts about $\mathcal{P}$, see e.g. \cite{GK2014}, and define a set of branch cuts $\mathcal{C}$ such that $\sqrt{4 - \Delta^2}$ becomes single-valued on $\C \setminus \mathcal{C}$.

In the case of vanishing initial datum $q_0 \equiv 0$, we have $a = 1$, thus $\Delta(k) = 2\cos(kL)$, and hence $4 - \Delta^2 = 4\sin^2(kL)$ has a double zero at each of the points $n\pi/L$, $n \in \Z$. According to the so-called Counting Lemma (see \cite[Lemma 6.3]{GK2014}), the zero-set of $4 - \Delta^2$ has a similar structure for large $k$ for any initial datum $q_0 \in L^2([0,L])$ in the following sense: For $n \in \Z$, let 
\begin{align}\label{Dndef}
\mathcal{D}_n = \bigg\{k \in \C \, \bigg| \, \bigg|k - \frac{n\pi}{L}\bigg| < \frac{\pi}{4L}\bigg\}
\end{align}
denote the open disk of radius $\pi/(4L)$ centered at $n\pi/L$. Then, there is an integer $N > 0$ such that, counted with multiplicity, $4 - \Delta^2$ has exactly two roots in each disk $\mathcal{D}_n$ with $|n| > N$ and exactly $4N + 2$ roots in the disk $\{|k| < N\pi/L + \pi/(4L)\}$.
Using this result and employing the lexicographic ordering of complex numbers
$$u \preccurlyeq v \quad \text{if and only if} \quad \begin{cases} \re u < \re v \\
\qquad \text{or} \\
\text{$\re u = \re v$ and $\im u \leq \im v$},\end{cases}$$
we can write $\mathcal{P}$ as
\begin{align}\label{calPdef}
  \mathcal{P} = \bigcup_{n\in \Z} \{\lambda_n^+, \lambda_n^-\}
\end{align}
where 
$$\cdots \preccurlyeq \lambda_{n-1}^+ \preccurlyeq \lambda_n^- \preccurlyeq \lambda_n^+ \preccurlyeq \lambda_{n+1}^- \preccurlyeq \cdots, \qquad n \in \Z,$$
and $\lambda_n^\pm$ belong to $\mathcal{D}_n$ for all large enough  $|n|$.
The symmetry $\Delta(k) = \overline{\Delta(\bar{k})}$ implies that the periodic spectrum $\mathcal{P}$ is invariant under complex conjugation and that $\Delta$ is real-valued on $\R$. To define $\mathcal{C}$, we consider the two cases $\lambda = 1$ and $\lambda = -1$ separately.

In the defocusing case (i.e., $\lambda = 1$), the periodic spectrum $\mathcal{P}$ is purely real and 
$$\lambda_{n-1}^+ < \lambda_n^- \leq \lambda_n^+ < \lambda_{n+1}^-, \qquad n \in \Z.$$
The open interval $(\lambda_n^-, \lambda_n^+)$ is called the $n$th spectral gap whenever it is nonempty. 
For $\lambda = 1$, we define $\mathcal{C}$ as the union of all spectral gaps:
\begin{align}\label{calCdef1}
\mathcal{C} = \bigcup_{n \in \Z}  (\lambda_n^-, \lambda_n^+), \qquad \lambda = 1.
\end{align}
Thus, in this case $\mathcal{C}$ is a union of open subintervals of $\R$. The $n$th interval $(\lambda_n^-, \lambda_n^+)$ is contained in $\mathcal{D}_n$ for all sufficiently large $|n|$.

In the focusing case (i.e., $\lambda = -1$), $\mathcal{P}$ is typically not a subset of $\R$. If $z_1, z_2 \in \C$, we let $(z_1, z_2)$ denote the open straight-line segment from $z_1$ to $z_2$, i.e.,
$$(z_1, z_2) = \{z_1 + t(z_2-z_1) \in \C \, | \, 0 < t < 1 \}.$$
Using this notation, it is possible to define $\mathcal{C}$ by (\ref{calCdef1}), but this definition has the disadvantage that it may break the existing symmetry under complex conjugation. Therefore, we instead define $\mathcal{C}$ as follows.
Let $N >0$ be as in the Counting Lemma so that there are $4N + 2$ roots counted with multiplicity in the disk $\{|k| < N\pi/L + \pi/(4L)\}$. An even number, say $2M$, of these $4N + 2$ roots have odd multiplicity; let $\{\lambda_j^{\text{odd}}\}_1^{2M}$ denote these roots. 
Since $\Delta$  is real-valued on $\R$ and $\Delta = 2 \cos(kL) + o(1)$ as $k \to \pm \infty$, an even number, say $2\hat{M}$, of the roots $\{\lambda_j^{\text{odd}}\}_1^{2M}$ are real; let $\{\hat{\lambda}_j^-, \hat{\lambda}_j^+\}_{j=1}^{\hat{M}}$ denote these real roots ordered so that
$$\hat{\lambda}_1^- \leq \hat{\lambda}_1^+ \leq \cdots \leq \hat{\lambda}_{\hat{M}}^- \leq \hat{\lambda}_{\hat{M}}^+.$$
Let $\{\tilde{\lambda}_j^-, \tilde{\lambda}_j^+\}_{j=1}^{M-\hat{M}}$ denote the remaining odd-order roots in the disk $\{|k| < N\pi/L + \pi/(4L)\}$ ordered lexicographically:
$$\tilde{\lambda}_1^- \preccurlyeq \tilde{\lambda}_1^+ \preccurlyeq \cdots \preccurlyeq \lambda_{M-\hat{M}}^- \preccurlyeq \lambda_{M-\hat{M}}^+.$$
Note that there is an even number of roots $\tilde{\lambda}_j^\pm$ on any vertical line  $\re k = \text{constant}$.
We define $\mathcal{C}$ as the union of the open real intervals $(\hat{\lambda}_j^-, \hat{\lambda}_j^+)$, the vertical line segments $(\tilde{\lambda}_j^-, \tilde{\lambda}_j^+)$, as well as the open real intervals $(\lambda_n^-, \lambda_n^+)$, $|n|>N$:
\begin{align}\label{calCdef2}
\mathcal{C} = \bigcup_{j=1}^{\hat{M}} (\hat{\lambda}_j^-, \hat{\lambda}_j^+) \cup \bigcup_{j=1}^{M-\hat{M}} (\tilde{\lambda}_j^-, \tilde{\lambda}_j^+) \cup \bigcup_{|n| > N} (\lambda_n^-, \lambda_n^+), \qquad \lambda = -1.
\end{align}

In both the defocusing and the focusing case, we orient the contour $\mathcal{C}$ so that (i) any part of $\mathcal{C}$ that is contained in $\Sigma = \R \cup i\R$ is oriented in the same direction as $\Sigma$, i.e., $\mathcal{C} \cap i\R_+$, $\mathcal{C} \cap \R_+$, $\mathcal{C} \cap i\R_-$, and $\mathcal{C} \cap \R_-$ are oriented down, right, up, and left, respectively, and (ii) the subcontours $\mathcal{C} \cap (D_1 \cup D_2)$ and $\mathcal{C} \cap (D_3 \cup D_4)$ are oriented downward and upward, respectively. See Figure \ref{contourfig} for two examples of $\mathcal{C}$.

We can now complete the definition of $\tilde{\Gamma}$. 
The function $\sqrt{4 - \Delta(k)^2}$ in (\ref{tildeGammadef}) is single-valued for $k \in \C \setminus \mathcal{C}$ up to a choice of sign. 
This sign can be fixed by considering the large $k$ behavior of $\tilde{\Gamma}$.
Indeed, as $k \to \infty$ in $\R$, (\ref{abasymptotics}) implies that $a \sim 1$ and  $\bar{a} \sim 1$, and so 
$$\Delta = a e^{-ikL}+\bar{a} e^{ikL} \sim 2\cos(kL), \qquad \sqrt{4 - \Delta^2} \sim \pm 2\sin(kL).$$ 
Since we would like to have $\tilde{\Gamma} = O(1/k)$ as $k \to \infty$ in $\R$ (at least if $k$ stays away from the disks $\mathcal{D}_n$), we fix the branch of $\sqrt{4 - \Delta^2}$ by requiring that (see (\ref{sqrt4Deltaasymptotics}) for a more detailed estimate)
\begin{align}\label{sqrt4Delta2}
\sqrt{4 - \Delta^2} \sim 2\sin(kL), \qquad k \to \infty, \ k \in \R  \setminus \cup_{n\in \Z} \mathcal{D}_n.
\end{align}
In summary, $\tilde{\Gamma}: \C \setminus \mathcal{C} \to \C$ is defined by (\ref{tildeGammadef}) with the branch of $\sqrt{4 - \Delta^2}$ fixed by (\ref{sqrt4Delta2}).

\subsection{RH problem for $\tilde{m}$}
Define the function $g(k)$ by 
\begin{align}\nonumber
& g_1 = \begin{pmatrix} \frac{a + \lambda \bar{b} \Gamma e^{2ikL}}{a + \lambda \bar{b} \tilde{\Gamma} e^{2ikL}} & (\tilde{\Gamma} - \Gamma) e^{2ikL} e^{-2i\theta} \\ 0 & \frac{a + \lambda \bar{b} \tilde{\Gamma} e^{2ikL}}{a + \lambda \bar{b}\Gamma e^{2ikL}} \end{pmatrix}, \qquad
g_2 = \begin{pmatrix} 1 & 0 \\ 
\frac{\lambda (\bar{\tilde{\Gamma}} - \bar{\Gamma})e^{2i\theta}}{(a - \lambda b \bar{\Gamma})(a - \lambda b \bar{\tilde{\Gamma}})} & 1 \end{pmatrix},
	\\ \label{gdef}
& g_3 = \begin{pmatrix} 1 & \frac{(\tilde{\Gamma} - \Gamma)e^{-2i\theta}}{(\bar{a} - \lambda \bar{b} \Gamma)(\bar{a} - \lambda \bar{b} \tilde{\Gamma})} \\
0 & 1 \end{pmatrix}, \qquad
g_4 = \begin{pmatrix} \frac{\bar{a} + \lambda b \bar{\tilde{\Gamma}} e^{-2ikL}}{\bar{a} + \lambda b \bar{\Gamma} e^{-2ikL}} & 0 \\
\lambda(\bar{\tilde{\Gamma}} - \bar{\Gamma}) e^{-2ikL} e^{2i\theta} & \frac{\bar{a} + \lambda b \bar{\Gamma} e^{-2ikL}}{\bar{a} + \lambda b \bar{\tilde{\Gamma}} e^{-2ikL}} \end{pmatrix},
\end{align} 
where $g_j$ denotes the restriction of $g$ to  $D_j$ for $j = 1, \dots, 4$.
We introduce $\tilde{m}(x,t,k)$ by
\begin{align}\label{tildemdef}
\tilde{m} = mg,
\end{align}
where $m$ is the solution of RH problem \ref{RHm}. 
The function $g$ is defined in such a way that $\tilde{m}$ satisfies the jump relation $\tilde{m}_- = \tilde{m}_+ \tilde{v}$ on $\Sigma \setminus \mathcal{C}$, where the jump matrix $\tilde{v}$ is given by the same expression (\ref{vdef}) as $v$ except that $\Gamma$  is replaced by $\tilde{\Gamma}$.
This follows by a direct computation using that 
\begin{align}\label{tildevgv}
\tilde{v}_1 = g_1^{-1} v_1 g_2, \qquad
\tilde{v}_2 = g_1^{-1} v_2 g_4, \qquad
\tilde{v}_3 = g_3^{-1} v_3 g_4, \qquad
\tilde{v}_4 = g_3^{-1} v_4 g_2.
\end{align}
However, because of the square root $\sqrt{4 - \Delta^2}$, $\tilde{v}$ is not given by the same expression as $v$ on $\Sigma \cap \mathcal{C}$, and if $\lambda = -1$, then $\tilde{m}$ may also have jumps across the contours $\mathcal{C} \cap D_j$, $j = 1, \dots, 4$.
It is quite remarkable that all these jumps can be expressed completely in terms of $a$ and $b$ alone.

Let $\mathcal{B} = \bar{\mathcal{C}} \setminus \mathcal{C}$ denote the set of branch points of $\sqrt{4 - \Delta^2}$. The set $\mathcal{B}$ is always contained in the periodic spectrum $\mathcal{P}$ and it may be strictly smaller than $\mathcal{P}$ if $4 - \Delta^2$ has roots of even multiplicity. 
Let $\tilde{\Sigma} = \Sigma \cup \bar{\mathcal{C}}$ denote the union of the cross $\Sigma = \R \cup i\R$ and the set of branch cuts and branch points, see Figure \ref{contourfig}.
Let $\mathcal{S}$ denote the set of self-intersections of the contour $\tilde{\Sigma}$.
In the defocusing case, $\mathcal{S}$ only consists of the origin. In the focusing case, $\mathcal{S}$ consists of the origin together with any points at which vertical branch cuts intersect the real axis. 
Let $\tilde{\Sigma}_\star = \tilde{\Sigma} \setminus (\mathcal{B} \cup \mathcal{S})$ denote the contour $\tilde{\Sigma}$ with all branch points and all points of self-intersection removed.
The jump matrix $\tilde{v}$ is defined for $k \in \tilde{\Sigma}_\star$ as follows:
\begin{align}\label{tildevdef}
\tilde{v} = \begin{cases}  \tilde{v}_1 = \begin{pmatrix}  \frac{a  - \lambda b \bar{\tilde{\Gamma}} - \lambda \tilde{\Gamma} (\bar{a}\bar{\tilde{\Gamma}} - \bar{b})e^{2ikL} }{a- \lambda b \bar{\tilde{\Gamma}}} & -\tilde{\Gamma}  e^{2ikL} e^{-2 i \theta } \\
 \frac{\lambda \bar{\tilde{\Gamma}} e^{2 i \theta } }{(a - \lambda b \bar{\tilde{\Gamma}}) (a + \lambda\bar{b} \tilde{\Gamma}  e^{2ikL})} & \frac{a}{a + \lambda \bar{b} \tilde{\Gamma}  e^{2ikL}} \end{pmatrix}, & k \in i\R_+ \setminus \bar{\mathcal{C}},
	\\ 
\tilde{v}_2 =  \begin{pmatrix} 1 - \lambda \tilde{\Gamma} \bar{\tilde{\Gamma}} 
& -\frac{(\bar{a} \tilde{\Gamma} e^{2ikL}+b) e^{-2 i \theta }}{\bar{a}+ \lambda b \bar{\tilde{\Gamma}} e^{-2ikL}} \\
 \frac{\lambda (a \bar{\tilde{\Gamma}}e^{-2ikL}+\bar{b}) e^{2 i \theta } }{a+ \lambda \bar{b} \tilde{\Gamma}  e^{2ikL} } &
   \frac{1}{(a+ \lambda \bar{b} \tilde{\Gamma}  e^{2ikL}) (\bar{a} + \lambda b \bar{\tilde{\Gamma}} e^{-2ikL})}  \end{pmatrix}, & k \in \R_+ \setminus \bar{\mathcal{C}},
   	\\  
\tilde{v}_3 = \begin{pmatrix} \frac{\bar{a}  - \lambda \bar{b} \tilde{\Gamma} - \lambda \bar{\tilde{\Gamma}}(a\tilde{\Gamma} - b) e^{-2ikL}}{\bar{a} - \lambda \bar{b} \tilde{\Gamma} } & - \frac{\tilde{\Gamma} e^{-2 i \theta}}{(\bar{a} - \lambda \bar{b} \tilde{\Gamma}) (\bar{a} + \lambda b \bar{\tilde{\Gamma}} e^{-2ikL} )} \\
 \lambda \bar{\tilde{\Gamma}} e^{-2ikL} e^{2 i \theta} & \frac{\bar{a}}{\bar{a} + \lambda b \bar{\tilde{\Gamma}} e^{-2ikL}}  \end{pmatrix}, & k \in i\R_- \setminus \bar{\mathcal{C}},
   	\\ 
 \tilde{v}_4 = \begin{pmatrix}	
 \frac{1 - \lambda \tilde{\Gamma} \bar{\tilde{\Gamma}}}{(a - \lambda b \bar{\tilde{\Gamma}}) (\bar{a} - \lambda \bar{b} \tilde{\Gamma} )} & -\frac{(a \tilde{\Gamma} - b)e^{-2 i \theta } }{\bar{a} - \lambda \bar{b} \tilde{\Gamma} } \\
 \frac{\lambda(\bar{a} \bar{\tilde{\Gamma}} - \bar{b})e^{2 i \theta }}{a - \lambda b \bar{\tilde{\Gamma}}} & 1
 \end{pmatrix}, & k \in \R_- \setminus \bar{\mathcal{C}},
 	\\
\tilde{v}^{\cut}_{D_1} = 
\begin{pmatrix} \frac{a + \lambda \bar{b} \tilde{\Gamma}_+ e^{2ikL}}{a + \lambda \bar{b} \tilde{\Gamma}_- e^{2ikL}} & (\tilde{\Gamma}_- - \tilde{\Gamma}_+) e^{2ikL} e^{-2i\theta} \\ 0 & \frac{a + \lambda \bar{b} \tilde{\Gamma}_- e^{2ikL}}{a + \lambda \bar{b}\tilde{\Gamma}_+ e^{2ikL}} \end{pmatrix}, & k \in \mathcal{C} \cap D_1,
	\\
\tilde{v}^{\cut}_{D_2} = \begin{pmatrix} 1 & 0 \\ 
\frac{\lambda (\bar{\tilde{\Gamma}}_- - \bar{\tilde{\Gamma}}_+)e^{2i\theta}}{(a - \lambda b \bar{\tilde{\Gamma}}_-)(a - \lambda b \bar{\tilde{\Gamma}}_+)} & 1 \end{pmatrix},
& k \in \mathcal{C} \cap D_2,
	\\
\tilde{v}^{\cut}_{D_3} = \begin{pmatrix} 1 & \frac{(\tilde{\Gamma}_- - \tilde{\Gamma}_+)e^{-2i\theta}}{(\bar{a} - \lambda \bar{b} \tilde{\Gamma}_-)(\bar{a} - \lambda \bar{b} \tilde{\Gamma}_+)} \\
0 & 1 \end{pmatrix}, & k \in \mathcal{C} \cap D_3,
	\\
\tilde{v}^{\cut}_{D_4} = \begin{pmatrix} \frac{\bar{a} + \lambda b \bar{\tilde{\Gamma}}_- e^{-2ikL}}{\bar{a} + \lambda b \bar{\tilde{\Gamma}}_+ e^{-2ikL}} & 0 \\
\lambda(\bar{\tilde{\Gamma}}_- - \bar{\tilde{\Gamma}}_+) e^{2i\theta} e^{-2ikL} & \frac{\bar{a} + \lambda b \bar{\tilde{\Gamma}}_+ e^{-2ikL}}{\bar{a} + \lambda b \bar{\tilde{\Gamma}}_- e^{-2ikL}} \end{pmatrix}, & k \in \mathcal{C} \cap D_4,
	\\
\tilde{v}_1^{\cut} = \tilde{v}^{\cut}_{D_1} \tilde{v}_{1-},
& k \in \mathcal{C} \cap i\R_+,	
	\\
\tilde{v}_2^{\cut} = \tilde{v}^{\cut}_{D_1} \tilde{v}_{2-},
& k \in \mathcal{C} \cap \R_+,	
	\\
\tilde{v}_3^{\cut} = \tilde{v}^{\cut}_{D_3} \tilde{v}_{3-},
& k \in \mathcal{C} \cap i\R_-,	
	\\
\tilde{v}_4^{\cut} = \tilde{v}^{\cut}_{D_3} \tilde{v}_{4-},
& k \in \mathcal{C} \cap \R_-,
\end{cases}
\end{align}
where $\tilde{v}_{j-}$ denotes the boundary values of the matrix $\tilde{v}_j$ as $k$ approaches $\mathcal{C}$ from the right.
We emphasize that the jump matrix $\tilde{v}$ can be computed from the knowledge of the initial datum alone.

\begin{figure}
\begin{center}
\begin{overpic}[width=.45\textwidth]{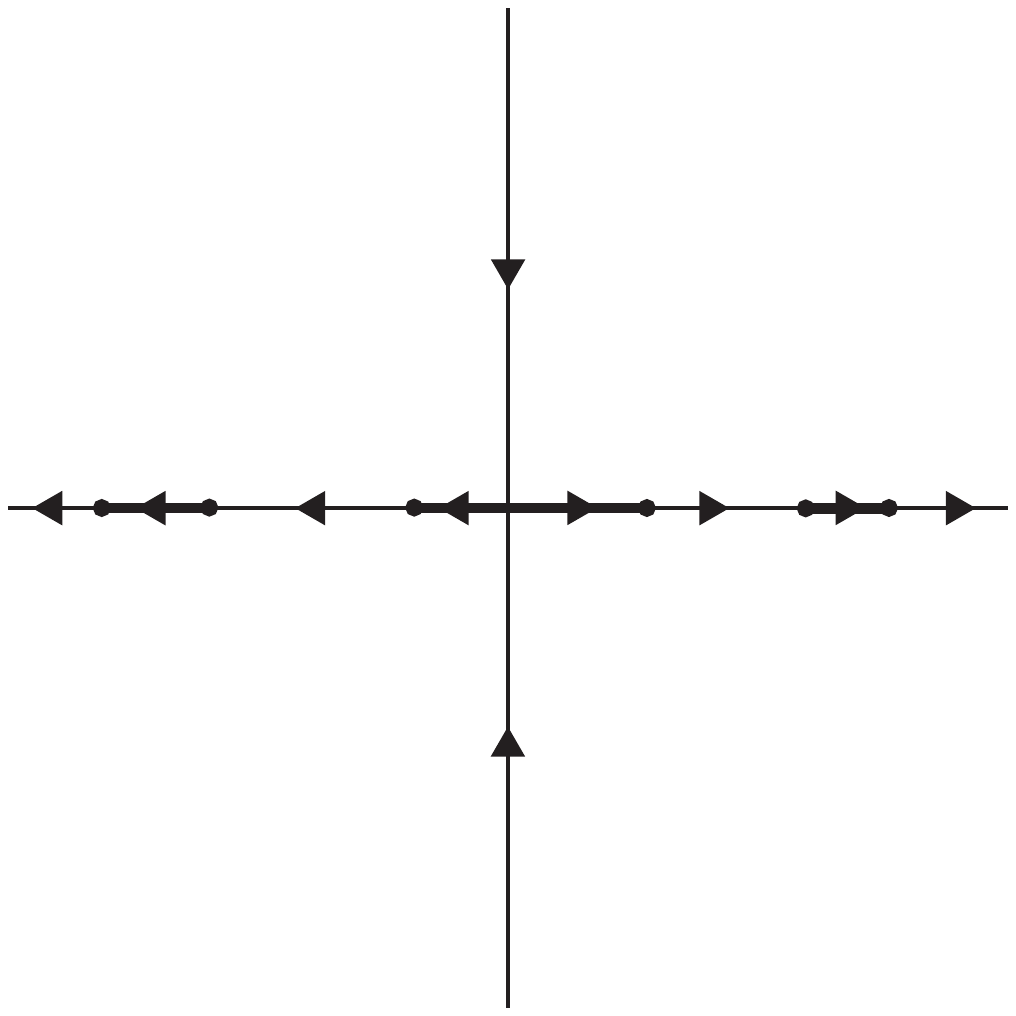}
      \put(52.5,72){\small $\tilde{v}_1$} 
      \put(92,53.5){\small $\tilde{v}_2$} 
      \put(54,43.5){\small $\tilde{v}_2^{\cut}$} 
      \put(68,53.5){\small $\tilde{v}_2$} 
      \put(80,43.5){\small $\tilde{v}_2^{\cut}$} 
      \put(52.5,25){\small $\tilde{v}_3$} 
      \put(3,53.5){\small $\tilde{v}_4$} 
      \put(12,43.5){\small $\tilde{v}_4^{\cut}$} 
      \put(29,53.5){\small $\tilde{v}_4$} 
      \put(41,43.5){\small $\tilde{v}_4^{\cut}$} 
     \end{overpic}\qquad
     \begin{overpic}[width=.45\textwidth]{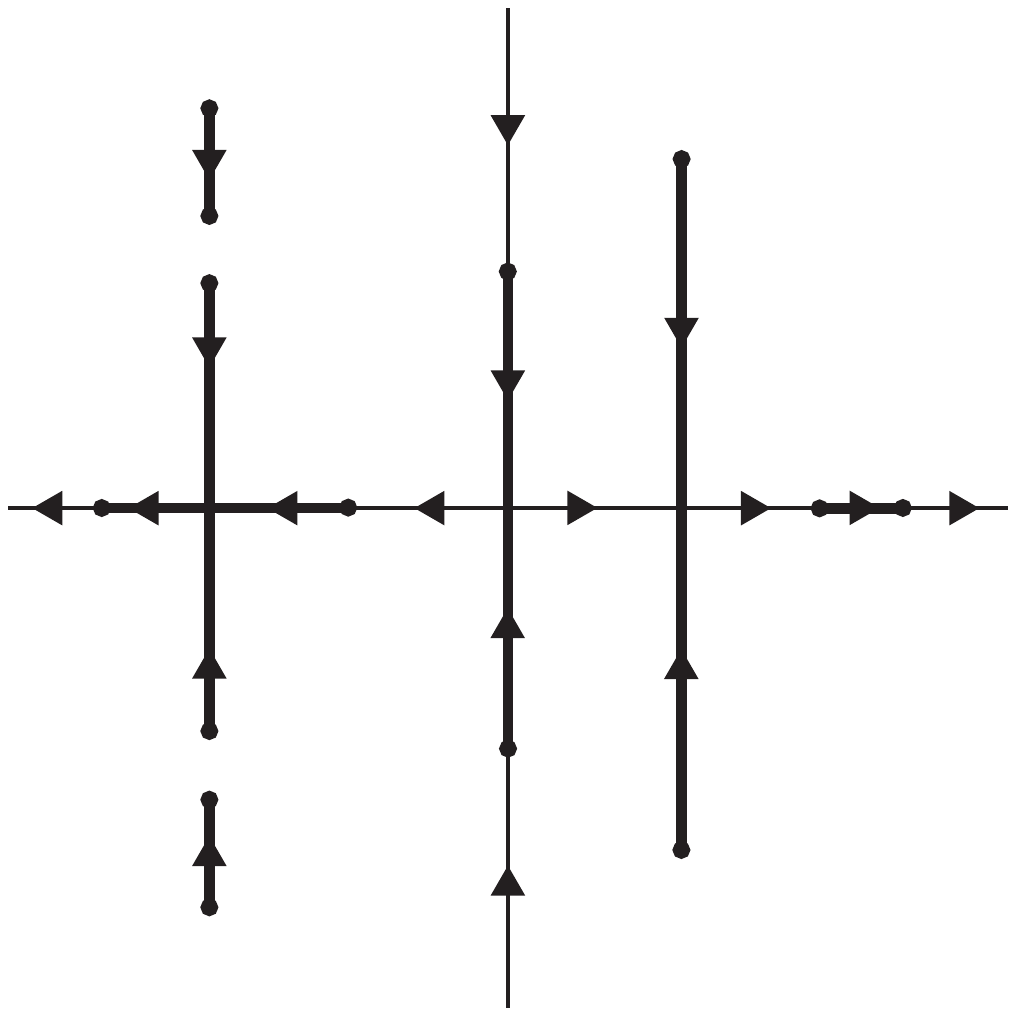}
      \put(52.5,86){\small $\tilde{v}_1$} 
      \put(52.5,62){\small $\tilde{v}_1^{\cut}$} 
      \put(69.5,67){\small $\tilde{v}_{D_1}^{\cut}$} 
      \put(55,53){\small $\tilde{v}_2$} 
      \put(72,53){\small $\tilde{v}_2$} 
      \put(82,53.5){\small $\tilde{v}_2^{\cut}$} 
      \put(23,83){\small $\tilde{v}_{D_2}^{\cut}$} 
      \put(23,65){\small $\tilde{v}_{D_2}^{\cut}$} 
      \put(23,33){\small $\tilde{v}_{D_3}^{\cut}$} 
      \put(23,14){\small $\tilde{v}_{D_3}^{\cut}$} 
      \put(93,53){\small $\tilde{v}_2$} 
      \put(52.5,12){\small $\tilde{v}_3$} 
      \put(52.5,37){\small $\tilde{v}_3^{\cut}$} 
      \put(2,53){\small $\tilde{v}_4$} 
      \put(11,53.5){\small $\tilde{v}_4^{\cut}$} 
      \put(26,53.5){\small $\tilde{v}_4^{\cut}$} 
      \put(69.5,33){\small $\tilde{v}_{D_4}^{\cut}$} 
      \put(41,53){\small $\tilde{v}_4$} 
     \end{overpic}
    \caption{\label{contourfig} Example of the contour $\tilde{\Sigma} = \R \cup i\R \cup \mathcal{C}$ in the complex $k$-plane and the associated jump matrices for RH problem \ref{RHmtilde} for $\lambda = 1$ (left) and $\lambda = -1$ (right). Dots indicate branch points and thick lines indicate the set of branch cuts $\mathcal{C}$.}
     \end{center}
\end{figure}

Whereas the function $m$, in general, has singularities at the poles of $\Gamma$, $\eta$, and $\xi$, it turns out that $\tilde{m}$ is analytic at these points. In fact, $\tilde{m}$ can have singularities only if $\tilde{\Gamma}$ has poles in the first or third quadrant. We make the following assumption. 

\begin{assumption}\label{tildepolesassumption}
We assume the following:
\begin{enumerate}[$-$]
\item $\tilde{\Gamma}$ has no poles on the contour $\tilde{\Sigma}$.

\item In $D_1$, $\tilde{\Gamma}(k)$ has at most finitely many poles $\{p_j\}_1^{n_1} \subset D_1$ and these poles are all simple.

\item In $D_3$, $\tilde{\Gamma}(k)$ has at most finitely many poles $\{q_j\}_{1}^{n_3} \subset D_2$ and these poles are all simple.
\end{enumerate}
\end{assumption}

 \begin{remark}\label{polesremark}
 Regarding assumption \ref{tildepolesassumption}, we note that there exist large families of initial conditions for which it can be shown explicitly that $\tilde{\Gamma}$ does not have poles. For example, the single exponential families considered in Section \ref{exponentialexamplesec} and in \cite{FLDxperiodic} are of this type. More precisely, the definition (\ref{tildeGammadef}) of $\tilde{\Gamma}$ has the form $\tilde{\Gamma} = \tilde{\gamma}/\bar{b}$, where the numerator $\tilde{\gamma}$ has no poles. Thus any possible poles of $\tilde{\Gamma}$ are generated by the zeros of $\bar{b}$. However, in many cases the zeros of $\bar{b}$ are cancelled by zeros of $\tilde{\gamma}$. Indeed, suppose $k_0$ is a zero of $\bar{b}$. Using the relation $a \bar{a} - \lambda b \bar{b} = 1$, we can write
 \begin{align}\label{tildegammadef}
\tilde{\gamma}(k) = \frac{\lambda(\bar{a} e^{ikL} - a e^{-ikL})}{2 e^{ikL}} 
\bigg(1 - \sqrt{1 + \frac{4\lambda b \bar{b}}{(\bar{a} e^{ikL} - a e^{-ikL})^2}}\bigg),
\end{align}
where the branch of the square root is fixed so that it tends to $1$ as $k$ tends to infinity in $\C  \setminus \cup_{n\in \Z} \mathcal{D}_n$ (it can be shown that $4\lambda b \bar{b}/(\bar{a} e^{ikL} - a e^{-ikL})^2 = O(k^{-2})$ in this limit as a consequence of (\ref{abasymptotics})).
Thus, assuming that $\bar{a} e^{ikL} - a e^{-ikL}$ is nonzero at $k_0$, we see that $\tilde{\gamma}$ vanishes at $k_0$ to the same order as $\bar{b}$ whenever the branch of the square root in (\ref{tildegammadef}) is such that it is close to $1$ for $k$ near $k_0$. It follows, in particular, that $\tilde{\Gamma}(k)$ cannot have poles whenever $k \in \C \setminus \cup_{n \in \Z} \mathcal{D}_n$ is large enough.
The question of whether $\tilde{\Gamma}$ is always pole-free is under consideration. For the family of single exponential initial data considered in Section \ref{exponentialexamplesec}, we will see that $\tilde{\Gamma}$ has no poles even though $\bar{b}$ has infinitely many zeros. 
\end{remark}

Let $\tilde{P} = \{p_j, \bar{p}_j\}_1^{n_1} \cup \{q_j, \bar{q}_j\}_1^{n_2}$ denote the set of poles of $\tilde{\Gamma}$ in $D_1 \cup D_3$ and their complex conjugates.
We will show that $\tilde{m}$ is the unique solution of the following RH problem.

\begin{RHproblem}{\bf(The RH problem for $\tilde{m}$)}\label{RHmtilde}
Find a $2 \times 2$-matrix valued function $\tilde{m}(x,t,k)$ with the following properties:
\begin{itemize}
\item $\tilde{m}(x,t,\cdot) : \C \setminus (\tilde{\Sigma} \cup \tilde{P}) \to \C^{2 \times 2}$ is analytic.

\item The limits of $m(x,t,k)$ as $k$ approaches $\tilde{\Sigma}_\star$ from the left and right exist, are continuous on $\tilde{\Sigma}_\star$, and satisfy
\begin{align}\label{mtildejump}
  \tilde{m}_-(x,t,k) = \tilde{m}_+(x, t, k) \tilde{v}(x, t, k), \qquad k \in \tilde{\Sigma}_\star.
\end{align}

\item $\tilde{m}(x,t,k) = I  + O\big(k^{-1}\big)$ as $k \to \infty$, $k \in \C \setminus \cup_{n\in \Z} \mathcal{D}_n$, where $\mathcal{D}_n$ is the open disk of radius $\pi/(4L)$ centered at $n\pi/L$ defined in (\ref{Dndef}).

\item $\tilde{m}(x,t,k) = O(1)$ as $k \to \tilde{\Sigma} \setminus \tilde{\Sigma}_\star$, $k \in \C \setminus \tilde{\Sigma}$.

\item At the points $p_j \in D_1$  and  $\bar{p}_j \in D_4$, $\tilde{m}$ has at most simple poles and the residues at these poles satisfy, for $j = 1, \dots, n_1$,
\begin{subequations}\label{pqresidues}
\begin{align}
& \underset{k = p_j}{\res} [\tilde{m}(x,t,k)]_2 =  \Big\{[\tilde{m}]_1 (a + \lambda \bar{b} \tilde{\Gamma} e^{2ikL})\bar{a} e^{2ikL} e^{-2i\theta} \Big\}(x,t,p_j) \, \underset{k=p_j}{\res} \tilde{\Gamma}(k),
	\\
& \underset{k=\bar{p}_j}{\res} [\tilde{m}(x,t,k)]_1 = \Big\{\lambda [\tilde{m}]_2 (\bar{a} + \lambda b \bar{\tilde{\Gamma}} e^{-2ikL}) a e^{-2ikL} e^{2i\theta} \Big\}(x,t,\bar{p}_j) \, \overline{\underset{k=p_j}{\res} \tilde{\Gamma}(k)}.
\end{align}

\item At the points $q_j \in D_3$  and  $\bar{q}_j \in D_2$, $\tilde{m}$ has at most simple poles and the residues at these poles satisfy, for $j = 1, \dots, n_2$,
\begin{align}
& \underset{k=q_j}{\res} [\tilde{m}(x,t,k)]_2 = \bigg\{ [\tilde{m}]_1 \frac{a e^{-2i\theta}}{\bar{a} - \lambda \bar{b} \tilde{\Gamma}}\bigg\}(x,t,q_j) \, \underset{k=q_j}{\res} \tilde{\Gamma}(k),
	\\
& \underset{k=\bar{q}_j}{\res} [\tilde{m}(x,t,k)]_1 = \bigg\{\lambda [\tilde{m}]_2 \frac{\bar{a} e^{2i\theta}}{a - \lambda b \bar{\tilde{\Gamma}}}\bigg\}(x,t,\bar{q}_j) \, \overline{\underset{k=q_j}{\res} \tilde{\Gamma}(k)}.
\end{align}
\end{subequations}

\end{itemize}
\end{RHproblem}

In order to show that $\tilde{m}$ and $\tilde{v}$ have the appropriate regularity properties, we need the following lemma which shows that some of the denominators in the definition (\ref{gdef}) of the matrices $g_j(k)$ are nowhere zero.

\begin{lemma}\label{Deltaidentitieslemma}
The following identities are valid:
\begin{align}\label{Deltaidentity}
a + \lambda \bar{b} \tilde{\Gamma} e^{2ikL} = a - \lambda b \bar{\tilde{\Gamma}}  = \frac{e^{ikL}}{2}\big(\Delta - i\sqrt{4 - \Delta^2}\big).
\end{align}
In particular, the function $a + \lambda \bar{b} \tilde{\Gamma} e^{2ikL} = a - \lambda b \bar{\tilde{\Gamma}}$ is nonzero for all $k \in \C \setminus \mathcal{C}$.
\end{lemma}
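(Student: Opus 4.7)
The plan is to verify the two equalities by direct algebraic substitution from the definition (\ref{tildeGammadef}) of $\tilde{\Gamma}$, and to derive the nonvanishing from the identity $(\Delta - i\sqrt{4-\Delta^2})(\Delta + i\sqrt{4-\Delta^2}) = 4$.

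\medskip

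\noindent\textbf{First equality.} Substituting (\ref{tildeGammadef}) directly, the factor $\bar{b}$ in the denominator cancels and $\lambda^2 = 1$, giving
\begin{align*}
\lambda \bar{b}\,\tilde{\Gamma}\, e^{2ikL} = \frac{e^{ikL}}{2}\Big(\bar{a}e^{ikL} - a e^{-ikL} - i\sqrt{4-\Delta^2}\Big).
\end{align*}
Adding $a$ and regrouping $a/2 + \bar{a}e^{2ikL}/2 = (e^{ikL}/2)(a e^{-ikL} + \bar{a} e^{ikL}) = (e^{ikL}/2)\Delta$ by the definition (\ref{Deltadef}) of $\Delta$, I obtain exactly $\tfrac{e^{ikL}}{2}(\Delta - i\sqrt{4-\Delta^2})$.

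\medskip

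\noindent\textbf{Second equality.} The only subtlety is to compute the Schwarz conjugate $\bar{\tilde{\Gamma}}(k) = \overline{\tilde{\Gamma}(\bar{k})}$. From (\ref{tildeGammadef}) and the facts that Schwarz conjugation sends $\bar{a}\mapsto a$, $\bar{b}\mapsto b$, $e^{ikL}\mapsto e^{-ikL}$ and flips the sign of $i$, one finds
\begin{align*}
\bar{\tilde{\Gamma}} = \frac{\lambda}{2 e^{-ikL} b}\Big(a e^{-ikL} - \bar{a} e^{ikL} + i\sqrt{4-\Delta^2}\Big),
\end{align*}
provided one checks that $\overline{\sqrt{4-\Delta(\bar{k})^2}} = \sqrt{4-\Delta(k)^2}$. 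This is the one point requiring care: since $\overline{\Delta(\bar{k})} = \Delta(k)$, the squared quantity conjugates correctly, and the sign of the square root is preserved because the branch was fixed by (\ref{sqrt4Delta2}) to agree with $2\sin(kL)$ on the real axis (outside the exceptional disks), which is itself Schwarz-symmetric; moreover the symmetry of $\mathcal{P}$ under complex conjugation implies $\bar{\mathcal{C}} = \mathcal{C}$, so the branch structure is compatible. Plugging this into $a - \lambda b\, \bar{\tilde{\Gamma}}$ and performing the same collection of terms as before yields $\tfrac{e^{ikL}}{2}(\Delta - i\sqrt{4-\Delta^2})$, matching the first expression.

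\medskip

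\noindent\textbf{Nonvanishing.} Finally, observe that $(\Delta - i\sqrt{4-\Delta^2})(\Delta + i\sqrt{4-\Delta^2}) = \Delta^2 + (4 - \Delta^2) = 4$ for every $k \in \C \setminus \mathcal{C}$. Since the product equals a nonzero constant, neither factor can vanish, and multiplication by the nowhere-zero $e^{ikL}/2$ preserves this, proving the last claim.

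\medskip

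\noindent\textbf{Main obstacle.} The only conceptual (as opposed to bookkeeping) difficulty is the verification of the branch compatibility under Schwarz conjugation needed for the second equality; once that is settled, the rest is routine algebra using only the definitions of $\tilde{\Gamma}$ and $\Delta$, without any use of the unit-determinant relations (\ref{det1relations}).
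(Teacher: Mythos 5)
Your proof is correct and follows essentially the same route as the paper: both identities by direct substitution of the definition of $\tilde{\Gamma}$, and the nonvanishing from the algebraic fact that $\Delta - i\sqrt{4-\Delta^2}$ cannot vanish since $(\Delta - i\sqrt{4-\Delta^2})(\Delta + i\sqrt{4-\Delta^2}) = 4$ (the paper phrases this as the contradiction $\Delta^2 = \Delta^2 - 4$). Your explicit verification that the branch of $\sqrt{4-\Delta^2}$ is Schwarz-symmetric is a detail the paper leaves implicit in its ``direct computation,'' and it is handled correctly.
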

\begin{proof}
The identities follow by a direct computation using the definitions (\ref{tildeGammadef}) and (\ref{Deltadef}) of $\tilde{\Gamma}$ and $\Delta$. If $\Delta - i\sqrt{4 - \Delta^2} = 0$ at some $k$, then $\Delta^2 = \Delta^2 - 4$ at $k$ which is a contradiction. Thus $a + \lambda \bar{b} \tilde{\Gamma} e^{2ikL} = a - \lambda b \bar{\tilde{\Gamma}}$ has no zeros.
\end{proof}

\subsection{Main result}
The following theorem, which is the main result of the paper, provides an expression for the solution $q(x,t)$ of the $x$-periodic NLS equation in terms of the solution of the RH problem \ref{RHmtilde}. Since the formulation of this RH problem only involves quantities defined in terms of the initial datum, the theorem provides an effective solution of the IVP for the $x$-periodic NLS equation.

\begin{theorem}\label{mainth}
Suppose $q(x,t)$ is a smooth solution of (\ref{NLS}) for $(x,t) \in \R \times [0,\infty)$ which is $x$-periodic of period $L > 0$, i.e., $q(x+L, t) = q(x,t)$.
Define $a(k)$ and $b(k)$ by (\ref{abdef}) and let $\tilde{\Gamma}(k)$ be the function defined in terms of $a$ and $b$ by (\ref{tildeGammadef}). 
Suppose Assumption \ref{tildepolesassumption} holds.
 
Then the RH problem \ref{RHmtilde} has a unique solution $\tilde{m}(x,t,k)$ for each $(x,t) \in [0,L] \times [0,\infty)$ and the solution $q$ can be obtained from  $\tilde{m}$ via the relation
\begin{align}\label{recoverq}
  q(x,t) = 2i\lim_{k \to \infty} k \tilde{m}_{12}(x,t,k), \qquad (x,t) \in [0,L] \times [0,\infty),
\end{align}
where the limit is taken along any ray $\{k | \arg k = \phi\}$ where $\phi \in \R \setminus \{n\pi/2 \, | \, n \in \Z\}$ (i.e., the ray is not contained in $\R \cup i\R$).
\end{theorem}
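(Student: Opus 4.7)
The natural approach is to show that the function $\tilde m$ defined by (\ref{tildemdef}) as $\tilde m = mg$, with $m$ the solution supplied by Proposition \ref{mprop}, is the unique solution of RH problem \ref{RHmtilde}; the recovery formula (\ref{recoverq}) will then follow from (\ref{recoverqm}) once I check that the factor $g$ does not contribute to the coefficient of $k^{-1}$ in $\tilde m_{12}$ at infinity along rays off $\R \cup i\R$.

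First I would verify the jump structure. On $\Sigma \setminus \mathcal C$ the conjugation identities (\ref{tildevgv}) combined with $m_- = m_+ v$ immediately yield $\tilde m_- = \tilde m_+ \tilde v$ with the $\tilde v_j$ of (\ref{tildevdef}). On $\mathcal C \cap D_j$ the matrix $g$ itself jumps because $\tilde\Gamma$ is defined through a square root that has a cut on $\mathcal C$; a direct computation of $g_{j-}^{-1}g_{j+}$ produces the matrices $\tilde v^{\cut}_{D_j}$, and on $\mathcal C \cap \Sigma$ both contributions combine to give $\tilde v_j^{\cut} = \tilde v^{\cut}_{D_j}\tilde v_{j-}$. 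The fact that the denominators $a + \lambda\bar b \tilde\Gamma e^{2ikL}$ and $\bar a + \lambda b \bar{\tilde\Gamma}e^{-2ikL}$ appearing in $\tilde v_1,\tilde v_2,\tilde v_3,\tilde v_4$ are nowhere zero off $\mathcal{C}$ is guaranteed by Lemma \ref{Deltaidentitieslemma}, so $\tilde v$ is well defined and continuous on $\tilde\Sigma_\star$.

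The most delicate step is analyticity of $\tilde m$ off $\tilde\Sigma \cup \tilde P$. Here one must show that each pole of $m$ located at a point of $P$ (coming from the poles of $\Gamma$, $\eta$, or $\xi$) is exactly cancelled by a matching factor of $g$, while new singularities appear only at the poles of $\tilde\Gamma$ in $D_1 \cup D_3$. Inspection of (\ref{gdef}) shows that the scalar entries of $g$ contain precisely the combinations $a + \lambda\bar b \Gamma e^{2ikL}$, $a - \lambda b\bar\Gamma$, and their conjugates, whose zeros are exactly the poles of $\Gamma$, $\eta$, and $\xi$ listed in Assumption \ref{poleassumption}; then the residue relations (\ref{Gammaresidues})--(\ref{xiresidues}) for $m$ translate into the regularity of $mg$ at those points. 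In a parallel computation, $g_1$ and $g_3$ have simple poles at $p_j$, $\bar p_j$, $q_j$, $\bar q_j$ with singular parts proportional to the residues of $\tilde\Gamma$, and using the regularity of $m$ at these points (since $\tilde P \cap P = \emptyset$ generically) one reads off the new residue conditions (\ref{pqresidues}).

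For the behavior at infinity, the definition (\ref{tildeGammadef}) of $\tilde\Gamma$ together with the branch normalization (\ref{sqrt4Delta2}) and the asymptotics (\ref{abasymptotics}) of $a,b$ yield $\tilde\Gamma(k) = O(1/k)$ as $k \to \infty$ in $\C \setminus \bigcup_n \mathcal D_n$; since also $\Gamma(k) = O(1/k)$, every off-diagonal entry of $g$ is $O(1/k)$ and the diagonal entries tend to one in this limit. Hence $g = I + O(1/k)$ away from the excluded disks, so $\tilde m = I + O(1/k)$ there and $\lim_{k\to\infty}k\tilde m_{12} = \lim_{k\to\infty}k m_{12}$ along any admissible ray, which together with (\ref{recoverqm}) gives (\ref{recoverq}). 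Uniqueness of RH problem \ref{RHmtilde} follows as in Proposition \ref{mprop}: $\det\tilde v = 1$ on $\tilde\Sigma_\star$, the residue conditions (\ref{pqresidues}) are of the standard Zakharov--Shabat type and can be regularized by diagonal Blaschke-type factors, and then the absence of nontrivial solutions vanishing at infinity is classical.

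The main obstacle I expect is the pole-cancellation step: one must check algebraically and exhaustively that for every point of $P$ the prescribed column-wise vanishing of $m$ combined with the column-wise singular structure of $g_j$ produces no remaining poles, and that Assumption \ref{poleassumption} is in fact compatible with Assumption \ref{tildepolesassumption} so that $\tilde P$ consists exactly of the poles of $\tilde\Gamma$ in $D_1 \cup D_3$. All other verifications are either algebraic manipulations fully dictated by the definitions (\ref{gdef}) and (\ref{tildeGammadef}) or appeals to the classical well-posedness theory of RH problems with simple-pole residue conditions.
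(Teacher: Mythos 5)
Your overall strategy (define $\tilde m = mg$, verify jumps via (\ref{tildevgv}) and $g_{j+}^{-1}g_{j-}$, prove uniqueness by standard determinant/Liouville arguments) matches the paper's, but two steps as you describe them do not go through.

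The serious gap is in the asymptotics of $g$ at infinity. You argue that since $\tilde\Gamma = O(1/k)$ and $\Gamma = O(1/k)$, ``every off-diagonal entry of $g$ is $O(1/k)$.'' This fails: the entry $(g_1)_{12} = (\tilde\Gamma - \Gamma)\,e^{2ik(L-x)}e^{-4ik^2t}$ contains the factor $e^{-4ik^2 t}$, which grows like $e^{4t\,\mathrm{Im}(k^2)}$ and is exponentially large throughout $D_1$ (similarly for $(g_2)_{21}$, $(g_3)_{12}$, $(g_4)_{21}$ in their quadrants). A bound $\tilde\Gamma - \Gamma = O(1/k)$ is therefore nowhere near sufficient. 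The paper fixes $(x,t)$, chooses a final time $T \in (t,\infty)$, solves the global relation (\ref{periodicgr}) for $\Gamma$ to obtain (\ref{Gammasolvedfor}), and deduces the refined estimate $\tilde\Gamma - \Gamma = O\bigl(e^{4ik^2T}/k\bigr)$ (equation (\ref{tildeGammaminusGamma})); only then does the product $e^{4ik^2(T-t)}$ decay and yield $(g_1)_{12} = O(1/k)$. Without invoking the global relation and the freedom to take $T > t$, the normalization $\tilde m = I + O(k^{-1})$ and hence the recovery formula (\ref{recoverq}) cannot be established. This is the central analytic content of the theorem and is missing from your argument.

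The second issue is your analyticity step. You propose to cancel the poles of $m$ at the points of $P$ against singular factors of $g$ using the residue conditions (\ref{Gammaresidues})--(\ref{xiresidues}). That route requires Assumption \ref{poleassumption}, which is \emph{not} a hypothesis of Theorem \ref{mainth} (only Assumption \ref{tildepolesassumption} is assumed). The paper explicitly avoids this by writing $\tilde m$ directly in terms of the entire eigenfunctions $\mu_j$ and $\tilde\Gamma$ (equations (\ref{tildemmu2})), so that analyticity off $\tilde\Sigma \cup \tilde P$ follows from Lemma \ref{Deltaidentitieslemma} alone, with no pole cancellation and no compatibility discussion between $P$ and $\tilde P$. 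As written, your argument proves the conclusion only for the restricted class of initial data satisfying Assumption \ref{poleassumption}, and the ``exhaustive cancellation check'' you flag as the main obstacle is in fact unnecessary if you express $\tilde m$ through the eigenfunctions from the start.
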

\begin{proof}
The function $\tilde{\Gamma}$ is analytic in $(D_1 \cup D_3) \setminus (\tilde{\Sigma} \cup \tilde{P})$ with continuous boundary values on $\tilde{\Sigma}_\star$. 
Thus it follows from Lemma \ref{Deltaidentitieslemma} that each of the matrices appearing on the right-hand side of (\ref{tildevdef}) is well-defined and continuous on its domain of definition. In particular, $\tilde{v}$ is well-defined and piecewise continuous on $\tilde{\Sigma}$. 

Let us prove uniqueness of $\tilde{m}$. 
The problem with a nonempty set $\tilde{P}$ can be transformed into a problem for which $\tilde{P}$ is empty following a standard procedure, see e.g. \cite{FI1996}; we may therefore assume that $\tilde{P}$ is empty when proving uniqueness.
Suppose $\tilde{m}$ is a solution of the RH problem \ref{RHmtilde}. We will first show that $\tilde{m}$ has unit determinant. The jump matrix $\tilde{v}$ has unit determinant everywhere on $\tilde{\Sigma}_\star$. Thus $\det \tilde{m}$ is an entire function except for possible singularities at points in the discrete set $\tilde{\Sigma} \setminus \tilde{\Sigma}_\star$. However, the assumption that $\tilde{m} = O(1)$ as $k \to \tilde{\Sigma} \setminus \tilde{\Sigma}_\star$ implies that these singularities are removable. Thus $\det \tilde{m}$ is an entire function. 
The assumption that $\tilde{m} = I  + O(k^{-1})$ as $k \in \C \setminus \cup_{n\in \Z} \mathcal{D}_n$ tends to infinity implies that there is a constant $C > 0$ such that $|\tilde{m} - I| \leq C/k$ on each of the circles $|k| = (n + 1/2)\pi /L$, $n = 1, 2, \dots$. Hence $\det \tilde{m} = 1 + O(k^{-1})$ uniformly on these circles. By the maximum modulus principle, we conclude that $\det \tilde{m} = 1 + O(k^{-1})$ as $k \to \infty$. Hence, by Liouville's theorem, $\det \tilde{m} = 1$ for all $k \in \C$. 
 
Assume $\tilde{m}$ and $\tilde{n}$ are two solutions of the RH problem \ref{RHmtilde}. 
Since $\tilde{n}$ has unit determinant, the function $\tilde{m} \tilde{n}^{-1}$ is well-defined on $\C \setminus \tilde{\Sigma}$ and has continuous boundary values on $\tilde{\Sigma}_\star$ which satisfy
$$(\tilde{m} \tilde{n}^{-1})_- = \tilde{m}_+ \tilde{v} \tilde{v}^{-1} \tilde{n}_+^{-1} = (\tilde{m} \tilde{n}^{-1})_+.$$
Thus $\tilde{m} \tilde{n}^{-1}$ is an entire function except for possible singularities at points in the discrete set $\tilde{\Sigma} \setminus \tilde{\Sigma}_\star$.
The same arguments that led to $\det \tilde{m} = 1$, show that these singularities are removable and that $\tilde{m} \tilde{n}^{-1}$ is in fact identically equal to the identity matrix. This proves uniqueness. 

Fix $(x,t) \in [0,L] \times [0,\infty)$. Choose $T \in (t, \infty)$ and define $\tilde{m}$ by (\ref{tildemdef}) with $m$ and $g$ defined using $T$ as final time.
We will show that  $\tilde{m}$ satisfies the RH problem \ref{RHmtilde} and that (\ref{recoverq}) holds at the point $(x,t)$.

The function $m$ obeys the symmetries (\ref{msymm}) and it is easy to check that $g$ satisfies the same symmetries:
\begin{align}\label{gsymm}
g(k) = \begin{cases} \sigma_1 \overline{g(\bar{k})} \sigma_1, & \lambda = 1, \\
\sigma_3 \sigma_1 \overline{g(\bar{k})}\sigma_1 \sigma_3, & \lambda = -1.
\end{cases}
\end{align}
It follows that $\tilde{m}$ also obeys these symmetries:
\begin{align}\label{mtildesymm}
\tilde{m}(x,t,k) = \begin{cases} \sigma_1 \overline{\tilde{m}(x,t,\bar{k})} \sigma_1, & \lambda = 1, \\
\sigma_3 \sigma_1 \overline{\tilde{m}(x,t,\bar{k})}\sigma_1 \sigma_3, & \lambda = -1.
\end{cases}
\end{align}
Moreover, since $m$  and $g$ have unit determinant, we have $\det \tilde{m} = 1$.

Let us show that $\tilde{m}$  is analytic for $k \in \C \setminus (\tilde{\Sigma} \cup \tilde{P})$.
This can be established by considering the analyticity properties of $m$ and using the conditions (\ref{Gammaresidues})-(\ref{xiresidues}) to show that $\tilde{m} = mg$ has no poles at these points. However, then we would have to restrict ourselves to initial data for which Assumption \ref{poleassumption} holds. We therefore instead give a direct argument which takes the eigenfunctions $\mu_j$ as its starting point. 

In light of the symmetries (\ref{mtildesymm}), it is enough to establish analyticity of $\tilde{m}$ for $k \in \C^+ \setminus (\tilde{\Sigma} \cup \tilde{P})$, where $\C^+ = \{\im k > 0\}$ denotes the open upper half-plane.
The definitions (\ref{mdef}) and (\ref{tildemdef}) imply that $\tilde{m}_1$ and $\tilde{m}_2$ can be expressed in terms of the eigenfunctions $\mu_j$ as follows:
\begin{subequations}\label{tildemmu}
\begin{align}\label{tildemmua}
& \tilde{m}_1 = \bigg(\frac{[\mu_2]_1}{a + \lambda \bar{b} \tilde{\Gamma} e^{2ikL}}, 
\frac{[\mu_2]_1 A(\tilde{\Gamma} - \Gamma)e^{2ikL}e^{-2i\theta}
+ [\mu_4]_2 (a + \lambda \bar{b} \tilde{\Gamma} e^{2ikL})}{\alpha}\bigg),
	\\\label{tildemmub}
& \tilde{m}_2 = \bigg(\frac{[\mu_1]_1}{d}
+  \frac{[\mu_3]_2 \lambda(\bar{\tilde{\Gamma}} -\bar{\Gamma}) e^{2i\theta}}{(a - \lambda b \bar{\Gamma})(a - \lambda b \bar{\tilde{\Gamma}})},  [\mu_3]_2\bigg).
\end{align}
\end{subequations}
Utilizing (\ref{mu42mu21}) to simplify the second column of $\tilde{m}_1$ and (\ref{mu11mu31}) to simplify the first column of $\tilde{m}_2$, we can write (\ref{tildemmu}) as
\begin{subequations}\label{tildemmu2}
\begin{align}\label{tildemmu2a}
& \tilde{m}_1 = \bigg(\frac{[\mu_2]_1}{a + \lambda \bar{b} \tilde{\Gamma} e^{2ikL}}, 
[\mu_2]_1(\bar{a} \tilde{\Gamma} e^{2ikL} + b) e^{-2i\theta}
+  [\mu_2]_2 (a + \lambda \bar{b} \tilde{\Gamma} e^{2ikL})\bigg),
	\\\label{tildemmu2b}
& \tilde{m}_2 = \bigg([\mu_3]_1 + [\mu_3]_2 \frac{\lambda(\bar{a} \bar{\tilde{\Gamma}} - \bar{b})}{a - \lambda b \bar{\tilde{\Gamma}}} e^{2i\theta},  [\mu_3]_2\bigg).
\end{align}
\end{subequations}
The functions $\{\mu_j\}_1^4$ are entire and $\tilde{\Gamma}$ is analytic in $(D_1 \cup D_3) \setminus (\mathcal{C} \cup \tilde{P})$. 
Furthermore, it was shown in Lemma \ref{Deltaidentitieslemma} that $a + \lambda \bar{b} \tilde{\Gamma} e^{2ikL} = a - \lambda b \bar{\tilde{\Gamma}}$ is nowhere zero.
Thus it follows from (\ref{mtildesymm}) and (\ref{tildemmu2}) that $\tilde{m}$ is an analytic function of $k \in \C \setminus (\tilde{\Sigma} \cup \tilde{P})$ with continuous boundary values on $\tilde{\Sigma}_\star$.
Recalling Assumption \ref{tildepolesassumption}, it also follows from (\ref{mtildesymm}) and (\ref{tildemmu2}) that $\tilde{m}(x,t,k) = O(1)$ as $k \to \tilde{\Sigma} \setminus \tilde{\Sigma}_\star$, $k \in \C \setminus \tilde{\Sigma}$, and that $\tilde{m}$ obeys the residue conditions (\ref{pqresidues}) at the points in $\tilde{P}$.

We next show that the boundary values of $\tilde{m}$ satisfy the jump relation (\ref{mtildejump}). 
We already saw from (\ref{tildevgv}) that (\ref{mtildejump}) holds on $(\R \cup i\R) \setminus \mathcal{C}$. To compute the jump across $\mathcal{C} \cap D_j$ for $j = 1, \dots, 4$,  we note that $\tilde{m} = mg$ and that $m$ has no jump across $\mathcal{C} \cap D_j$. Hence $\tilde{m}_- = \tilde{m}_+ \tilde{v}_{D_j}^{\cut}$ on $\mathcal{C} \cap D_j$ where
$$\tilde{v}_{D_j}^{\cut} = g_{j+}^{-1}g_{j-}, \qquad k \in \mathcal{C} \cap D_j.$$
A direct computation using the definition (\ref{gdef}) of $g_j$ shows that the matrix $\tilde{v}^{\cut}_{D_j}$ is given by the expression in (\ref{tildevdef}).
On the other hand, on the part of $\mathcal{C}$ that is contained in $\Sigma = \R \cup i\R$, we have 
$$\tilde{v} = \tilde{m}_+^{-1} \tilde{m}_-
= (m_+g_+)^{-1}(m_-g_-)
= g_+^{-1} v g_-, \qquad k \in \mathcal{C} \cap \Sigma.$$
It follows that
$$\tilde{v}
= \begin{cases}
\tilde{v}^{\cut}_1 = g_{1+}^{-1} v_1 g_{2-}
= g_{1+}^{-1} g_{1-} \tilde{v}_{1-}
= \tilde{v}^{\cut}_{D_1} \tilde{v}_{1-},
& k \in \mathcal{C} \cap i\R_+,
	\\
\tilde{v}^{\cut}_2 = g_{1+}^{-1} v_2 g_{4-}
= g_{1+}^{-1} g_{1-} \tilde{v}_{2-}
= \tilde{v}^{\cut}_{D_1} \tilde{v}_{2-},
& k \in \mathcal{C} \cap \R_+,
	\\
\tilde{v}^{\cut}_3 = g_{3+}^{-1} v_3 g_{4-}
= g_{3+}^{-1} g_{3-} \tilde{v}_{3-}
= \tilde{v}^{\cut}_{D_3} \tilde{v}_{3-},
& k \in \mathcal{C} \cap i\R_-,
	\\
\tilde{v}^{\cut}_4 = g_{3+}^{-1} v_4 g_{2-}
= g_{3+}^{-1} g_{3-} \tilde{v}_{4-}
= \tilde{v}^{\cut}_{D_3} \tilde{v}_{4-},
& k \in \mathcal{C} \cap \R_-.
\end{cases}
$$
This completes the proof of the jump relation (\ref{mtildejump}).

It only remains to show that $\tilde{m} = I  + O(k^{-1})$ as $k \in \C \setminus \cup_{n\in \Z} \mathcal{D}_n$ approaches infinity. 
This follows from the fact that $m = I + O(k^{-1})$ as $k \to \infty$ provided that we can show that 
\begin{align}\label{gasymptotics}
  g(k) = I + O\big(k^{-1}\big)\quad \text{as $k \to \infty$, $k \in \C \setminus \cup_{n\in \Z} \mathcal{D}_n$}.
\end{align}
In fact, due to the symmetry (\ref{gsymm}) of $g$, it is enough to prove for $j = 1,2$ that
\begin{align}\label{gjasymptotics}
  g_j(k) = I + O\big(k^{-1}\big)\quad \text{as $k \to \infty$, $k \in \bar{D}_j \setminus \cup_{n\in \Z} \mathcal{D}_n$}.
\end{align}
The estimates (\ref{abasymptotics}) imply
\begin{align}\nonumber
4 - \Delta^2
= &\; 4 - a^2 e^{-2ikL} - 2 a \bar{a} - \bar{a}^2 e^{2ikL}
= 4 - \bigg(1 + O\bigg(\frac{1}{k}\bigg) + O\bigg(\frac{e^{2ikL}}{k}\bigg)\bigg)^2 e^{-2ikL} 
	\\\nonumber
& - 2 \bigg(1 + O\bigg(\frac{1}{k}\bigg) + O\bigg(\frac{e^{2ikL}}{k}\bigg)\bigg)\bigg(1 + O\bigg(\frac{1}{k}\bigg) + O\bigg(\frac{e^{-2ikL}}{k}\bigg)\bigg)
	\\\nonumber
& - \bigg(1 + O\bigg(\frac{1}{k}\bigg) + O\bigg(\frac{e^{-2ikL}}{k}\bigg)\bigg)^2 e^{2ikL}
	\\\nonumber
= &\; 4 - e^{-2ikL} - 2 - e^{2ikL} + O\bigg(\frac{1}{k}\bigg) + O\bigg(\frac{e^{2ikL}}{k}\bigg) + O\bigg(\frac{e^{-2ikL}}{k}\bigg) 
	\\ \label{4minusDelta2}
= &\; 4\sin^2(kL) + O\bigg(\frac{1}{k}\bigg) + O\bigg(\frac{e^{2ikL}}{k}\bigg) + O\bigg(\frac{e^{-2ikL}}{k}\bigg),
\qquad k \to \infty, \ k \in \C. 
\end{align}
Using that (see e.g. \cite[Lemma F.2]{GK2014})
\begin{align}\label{sinbound}
|\sin{kL}| > e^{|\im k| L}/4 \quad \text{for $k \in \C \setminus \cup_{n\in \Z} \mathcal{D}_n$},
\end{align}
it follows that
\begin{align}\nonumber
4 - \Delta^2
& = 4\sin^2(kL)\bigg\{1 + O\bigg(\frac{1 + e^{-2L\im k} + e^{2L \im k}}{k e^{2L|\im k|}}\bigg) \bigg\}
	\\\label{4Deltaasymptotics}
& = 4\sin^2(kL)\big(1 + O(k^{-1})\big),
\qquad k \to \infty, \  k \in \C  \setminus \cup_{n\in \Z} \mathcal{D}_n,
\end{align}
and hence
\begin{align}\label{sqrt4Deltaasymptotics}
\sqrt{4 - \Delta^2}
& = 2\sin(kL)\big(1 + O(k^{-1})\big), \qquad k \to \infty, \  k \in \C  \setminus \cup_{n\in \Z} \mathcal{D}_n.
\end{align}
In particular, as $k \to \infty$ in the closed upper half-plane $\bar{\C}^+ = \{\im k \geq 0\}$, we have
\begin{align}\label{esqrtasymptotics}
-i\frac{e^{ikL}}{2}\sqrt{4 - \Delta^2} = -ie^{ikL}\sin(kL) + O(k^{-1}), \qquad k \to \infty, \ k \in \bar{\C}^+ \setminus \cup_{n\in \Z} \mathcal{D}_n,
\end{align}

The estimates (\ref{abasymptotics}) also imply that $a = 1 + O(k^{-1})$ and $\bar{a} e^{2ikL} = e^{2ikL} + O(k^{-1})$ as $k \to \infty$ in $\bar{\C}^+$; thus
\begin{align}\label{eDeltaasymptotics}
\frac{e^{ikL}}{2}\Delta
 = \frac{a +\bar{a} e^{2ikL}}{2} 
 = e^{ikL} \cos(kL) + O(k^{-1}), \qquad k \to \infty, \ k \in \bar{\C}^+.
\end{align}
Combining (\ref{esqrtasymptotics}) and (\ref{eDeltaasymptotics}), we obtain
\begin{align*}
\frac{e^{ikL}}{2}\big(\Delta - i\sqrt{4 - \Delta^2}\big)
 = 1 + O(k^{-1}), \qquad k \to \infty, \ k \in \bar{\C}^+ \setminus \cup_{n\in \Z} \mathcal{D}_n.
\end{align*}
Recalling the identities in Lemma \ref{Deltaidentitieslemma}, this shows that
\begin{align}\label{abtildeGammaasymptotics}
a + \lambda \bar{b} \tilde{\Gamma} e^{2ikL} = a - \lambda b \bar{\tilde{\Gamma}}  = 1 + O(k^{-1}), \qquad k \to \infty, \ k \in \bar{\C}^+ \setminus \cup_{n\in \Z} \mathcal{D}_n.
\end{align}

Let us consider $g_1$. As $k \to \infty$ in $\bar{D}_1$, we have $\Gamma = O(k^{-1})$ and $\bar{b}e^{2ikL} = O(k^{-1})$, and hence $a + \lambda \bar{b} \Gamma e^{2ikL} = 1 + O(k^{-1})$. Together with (\ref{abtildeGammaasymptotics}), this yields
$$(g_1)_{11} = (g_1)_{22}^{-1} = \frac{a + \lambda \bar{b} \Gamma e^{2ikL}}{a + \lambda \bar{b} \tilde{\Gamma} e^{2ikL}}
 = 1 + O(k^{-1}), \quad k \to \infty, \ k \in \bar{D}_1 \setminus \cup_{n\in \Z} \mathcal{D}_n,$$
showing (\ref{gjasymptotics}) for the diagonal elements of $g_1$. 
As for the nonzero off-diagonal element 
$$(g_1)_{12} = (\tilde{\Gamma} - \Gamma) e^{2ik(L-x)} e^{-4ik^2t},$$
we note that solving the global relation (\ref{periodicgr}) for $\Gamma = B/A$ gives
\begin{align}\label{Gammasolvedfor}
\Gamma = \frac{\lambda}{2 e^{ikL} \bar{b}} \bigg[\bar{a} e^{ikL} - a e^{-ikL} - i\sqrt{4 - \Delta^2 - \frac{4\lambda \bar{b} e^{4ik^2T} c^+}{A^2}} \bigg],
\end{align}
where the branch of the square root is fixed by the requirement (cf. (\ref{sqrt4Delta2}))
$$\sqrt{4 - \Delta^2 - \frac{4\lambda \bar{b} e^{4ik^2T} c^+}{A^2}} \sim 2\sin(kL), \qquad k \to \infty, \ k \in \bar{D}_1  \setminus \cup_{n\in \Z} \mathcal{D}_n.$$
By (\ref{cplusasymptotics}), (\ref{abasymptotics}), (\ref{ABasymptotics}), (\ref{sinbound}), and (\ref{4Deltaasymptotics}), we have
\begin{align}\label{becplusA2}
\frac{4\lambda \bar{b} e^{4ik^2T} c^+}{A^2(4 - \Delta^2)} 
 =
 O\bigg(\frac{e^{4ik^2T}}{k^2}\bigg), \qquad k \to \infty, \ k \in (\bar{D}_1 \cup \bar{D}_3) \setminus \cup_{n\in \Z} \mathcal{D}_n;
\end{align}
thus the branch cuts and the values of the square root in (\ref{Gammasolvedfor}) are close to those of $\sqrt{4 - \Delta^2}$ for large $k \in (\bar{D}_1 \cup \bar{D}_3) \setminus \cup_{n\in \Z} \mathcal{D}_n$. In particular, both of these roots are analytic for large $k \in (\bar{D}_1 \cup \bar{D}_3) \setminus \cup_{n\in \Z} \mathcal{D}_n$. Subtracting (\ref{Gammasolvedfor}) from (\ref{tildeGammadef}), we find
\begin{align*}
\tilde{\Gamma} - \Gamma
= \frac{\lambda}{2 i e^{ikL} \bar{b}} \bigg[ \sqrt{4 - \Delta^2} - \sqrt{4 - \Delta^2 - \frac{4\lambda \bar{b} e^{4ik^2T} c^+}{A^2}} \bigg], \qquad k \in (\bar{D}_1 \cup \bar{D}_3)  \setminus \cup_{n\in \Z} \mathcal{D}_n.
\end{align*}
Utilizing (\ref{cplusasymptotics}), (\ref{sqrt4Deltaasymptotics}), and (\ref{sinbound}), we infer that, as $k \in \bar{D}_1  \setminus \cup_{n\in \Z} \mathcal{D}_n$ approaches infinity,
\begin{align}\nonumber
\tilde{\Gamma} - \Gamma
& = \frac{\lambda \sqrt{4 - \Delta^2}}{2 i e^{ikL} \bar{b}} \bigg[1  - \sqrt{1 - \frac{4\lambda \bar{b} e^{4ik^2T} c^+}{A^2 (4 - \Delta^2)}} \bigg]
	\\\nonumber
& = \frac{\lambda \sin(kL)(1 + O(k^{-1}))}{i e^{ikL} \bar{b}} O\bigg(\frac{\bar{b} e^{4ik^2T}}{k \sin^2(kL)(1 + O(k^{-1}))} \bigg)
	\\\label{tildeGammaminusGamma}
& = O\bigg(\frac{\sin(kL)}{e^{ikL}} \frac{e^{4ik^2T}}{k \sin^2(kL)} \bigg)
 = O\bigg(\frac{e^{4ik^2T}}{k}  \bigg).
\end{align}
Since $x \leq L$ and $T > t$, this yields
\begin{align}\label{g112asymptotics}
(g_1)_{12} = O\bigg(\frac{e^{2ik(L-x)} e^{4ik^2(T-t)}}{k}\bigg)
= O\bigg(\frac{1}{k}\bigg), \qquad k \to \infty, \ k \in \bar{D}_1  \setminus \cup_{n\in \Z} \mathcal{D}_n.
\end{align}
This completes the proof of (\ref{gjasymptotics}) for $j = 1$. 

We next consider $g_2$. We have
$$(g_2)_{21} = \frac{\lambda (\bar{\tilde{\Gamma}} - \bar{\Gamma})e^{2ikx + 4ik^2 t}}{(a - \lambda b \bar{\Gamma})(a - \lambda b \bar{\tilde{\Gamma}})}.$$
Since $\Gamma = O(k^{-1})$ as $k \in \bar{D}_3$ approaches infinity, (\ref{abasymptotics}) gives
$$a - \lambda b \bar{\Gamma} = 1 + O(k^{-1}), \qquad k \to \infty, \ k \in \bar{D}_2.$$
Moreover, by (\ref{abtildeGammaasymptotics}), $a - \lambda b \bar{\tilde{\Gamma}}  = 1 + O(k^{-1})$ as $k \in \bar{D}_2 \setminus \cup_{n\in \Z} \mathcal{D}_n$ goes to infinity.
On the other hand, proceeding as in (\ref{tildeGammaminusGamma}), we deduce that, as $k \in \bar{D}_3  \setminus \cup_{n\in \Z} \mathcal{D}_n$ approaches infinity,
\begin{align*}
\tilde{\Gamma} - \Gamma
& = \frac{\lambda \sqrt{4 - \Delta^2}}{2 i e^{ikL} \bar{b}} \bigg[1  - \sqrt{1 - \frac{4\lambda \bar{b} e^{4ik^2T} c^+}{A^2 (4 - \Delta^2)}} \bigg]
 = O\bigg(\frac{\sin(kL) }{e^{ikL} \bar{b}} \frac{\bar{b} e^{4ik^2T} c^+}{A^2\sin^2(kL)} \bigg)
	\\
& = O\bigg(\frac{e^{4ik^2T} c^+}{e^{ikL} \sin(kL)} \bigg)
 = O\bigg(\frac{e^{4ik^2T}}{k} \bigg).
\end{align*}
Since $x \geq 0$ and $T > t$, we conclude that
\begin{align}\label{g221asymptotics}
(g_2)_{21} = O\bigg(\frac{e^{2ikx + 4ik^2(t-T)}}{k} \bigg)
= O\bigg(\frac{1}{k}\bigg), \qquad k \to \infty, \ k \in \bar{D}_2  \setminus \cup_{n\in \Z} \mathcal{D}_n.
\end{align}
This completes the proof of (\ref{gasymptotics}).

Finally, by (\ref{gsymm}), (\ref{g112asymptotics}), and (\ref{g221asymptotics}), the difference $g - I$ is exponentially small as $k \to \infty$ along any ray $\{k | \arg k = \phi\}$ with $\phi \in \R \setminus \{n\pi/2 \, | \, n \in \Z\}$. Since $\tilde{m} = mg$, equation (\ref{recoverq}) then follows from (\ref{recoverqm}).
 This completes the proof of the theorem. 
\end{proof}

\section{Example: A single exponential}\label{exponentialexamplesec}
We illustrate the approach of Theorem \ref{mainth} by considering the following initial datum involving a single exponential:
\begin{align}\label{singleexponentialq0}
q(x,0) = q_0 e^{\frac{2i\pi N}{L} x}, \qquad x \in [0,L],
\end{align}
where $N$ is an integer and the constant $q_0 > 0$ can be taken to be positive due to the phase invariance of (\ref{NLS}). 

\subsection{RH problem for $\tilde{m}$}
Let $N$ be an integer. Direct integration of the $x$-part of the Lax pair (\ref{laxpair}) with $q$ given by (\ref{singleexponentialq0}) leads to the following expressions for the spectral functions $a$ and $b$:
\begin{align}\label{abexpexample}
a(k) = \frac{e^{i (k L+\pi  N)} (L r \cos(L r)-i (k L+\pi N) \sin(L r))}{L r}, \qquad
b(k) = -\frac{q_0 e^{i (k L+\pi N)} \sin(L r)}{r},
\end{align}
where $r(k)$ denotes the square root
$$r(k) = \sqrt{\Big(k + \frac{\pi N}{L}\Big)^2 - \lambda q_0^2}.$$
It follows that 
$$\Delta(k) = 2(-1)^N \cos(L r(k)).$$
Note that $a$, $b$, and $\Delta$ are entire functions of $k$ even though $r(k)$  has a branch cut.
The periodic spectrum $\mathcal{P}$ is given by the zeros of $4 - \Delta^2 = 4\sin^2(Lr)$ and consists of the two simple zeros $\lambda^\pm$ defined by
$$\lambda^\pm = \begin{cases} - \frac{\pi N}{L} \pm q_0 & \text{if $\lambda = 1$}, \\
- \frac{\pi N}{L} \pm iq_0 & \text{if $\lambda = -1$},
\end{cases}$$
as well as the infinite sequence of double zeros
$$- \frac{\pi N}{L} \pm \sqrt{\frac{n^2 \pi^2}{L^2} + \lambda q_0^2}, \qquad n \in \Z \setminus \{0\}.$$
If $\lambda = 1$, then all zeros are real; if $\lambda = -1$, then the zeros are real for $|n|  \geq L q_0/\pi$ and non-real for $|n| < L q_0/\pi$.
The function $\sqrt{4 - \Delta^2}$ is single-valued on $\C \setminus \mathcal{C}$, where $\mathcal{C}$ defined in (\ref{calCdef1}) and (\ref{calCdef2}) consists of the single branch cut $\mathcal{C} = (\lambda^-, \lambda^+)$.
We fix the branch in the definition of $r$ so that $r:\C \setminus \mathcal{C} \to \C$ is analytic and $r(k) = k + \pi N /L + O(k^{-1})$ as $k \to \infty$.
Then, using (\ref{sqrt4Delta2}) to fix the overall sign,
$$\sqrt{4 - \Delta^2} = 2(-1)^N \sin(Lr),$$
and hence the function $\tilde{\Gamma}:\C \setminus \mathcal{C} \to \C$ defined in (\ref{tildeGammadef}) is given by
\begin{align}\label{tildeGammaexpexample}
 \tilde{\Gamma}(k) = -\frac{\lambda i (k - r+ \frac{\pi  N}{L})}{ q_0}.
\end{align}
Note that $\tilde{\Gamma}(k)$ has no poles in spite of the fact that $\bar{b}$ has infinitely many zeros, see Remark \ref{polesremark}.
The set of branch points $\mathcal{B}$  is given by $\mathcal{B} = \{\lambda^-, \lambda^+\}$. 
Substituting the expressions (\ref{abexpexample}) and (\ref{tildeGammaexpexample}) for $a,b,\tilde{\Gamma}$ into the definition (\ref{tildevdef}) of the jump matrix $\tilde{v}$, it follows that
\begin{subequations}\label{tildevjexpexample}
\begin{align}
& \tilde{v}_1 = \begin{pmatrix} 
 \frac{2 \lambda  r (k-r+\frac{\pi N}{L})}{q_0^2} &
   \frac{i \lambda (k -r+\frac{\pi N}{L}) e^{-2 i (\theta - k L)}}{q_0} \\
 \frac{i (k - r+\frac{\pi N}{L}) e^{2 i (\theta - k L + L r)}}{q_0} & 
 \frac{(k + \frac{\pi N}{L}) (1-e^{2 i L r}) + r (1+e^{2 i L r})}{2 r} 
 \end{pmatrix}, &&  k \in i\R_+,
	\\ 
& \tilde{v}_2 =  \begin{pmatrix} 
 \frac{2 \lambda  r (k - r+ \frac{\pi N}{L})}{q_0^2} &
   \frac{i \lambda  (k - r+ \frac{\pi N}{L}) e^{-2 i (\theta - k L + L r)}}{q_0} \\
 \frac{i (k - r+ \frac{\pi N}{L}) e^{2 i (\theta -k L + L r)}}{q_0} & 1 
  \end{pmatrix}, && k \in \R_+ \setminus \mathcal{C},
   	\\  
& \tilde{v}_3 = \begin{pmatrix} 
 \frac{2 \lambda  r (k - r+ \frac{\pi N}{L})}{q_0^2} & 
 \frac{i \lambda  (k - r+ \frac{\pi N}{L}) e^{-2 i (\theta -k L + L r)}}{q_0} \\
 \frac{i (k - r+ \frac{\pi N}{L})e^{2 i (\theta - k L)}}{q_0} &
   \frac{(k+\frac{\pi N}{L})(1 - e^{-2 i L r}) + r (1+e^{-2 i L r})}{2 r} 
   \end{pmatrix}, && k \in i\R_-,
   	\\ 
&  \tilde{v}_4 = \begin{pmatrix}	
 \frac{2 \lambda  r (k - r+ \frac{\pi N}{L})}{q_0^2} & 
 \frac{i \lambda  (k - r+ \frac{\pi N}{L}) e^{-2 i (\theta - k L)}}{q_0} \\
 \frac{i  (k - r+ \frac{\pi N}{L}) e^{2 i (\theta - k L)}}{q_0} & 1 
 \end{pmatrix}, && k \in \R_- \setminus \mathcal{C}.
\end{align}
\end{subequations}

The contour  $\tilde{\Sigma}$ is equal to $\R \cup i\R \cup (\lambda^-, \lambda^+)$ and is oriented as in Figure \ref{expexamplefig}.
If $\lambda = 1$, then $\mathcal{C} = ( - \frac{\pi N}{L}-q_0,  - \frac{\pi N}{L}+q_0)$ so the formulation of the RH problem also involves at least one of the jump matrices $\tilde{v}_2^{\cut}$ and $\tilde{v}_4^{\cut}$.
If $\lambda = -1$, then $\mathcal{C} = ( - \frac{\pi N}{L}-iq_0,  - \frac{\pi N}{L}+iq_0)$ so the formulation involves the jump matrices $\tilde{v}^{\cut}_{D_2}$ and $\tilde{v}^{\cut}_{D_3}$ if $N \geq 1$; $\tilde{v}^{\cut}_{D_1}$ and $\tilde{v}^{\cut}_{D_4}$ if $N \leq -1$; and $\tilde{v}_1^{\cut}$ and $\tilde{v}_3^{\cut}$ if $N = 0$. 
The case $N = 0$ of constant initial data was analyzed in \cite{FLDxperiodic} and the two cases $N$  and $-N$ are related by the change of variables $x \to -x$; we therefore henceforth assume that $N \geq 1$. 

Let $N \geq 1$ and $\mathfrak{r}(k) = \sqrt{|(k + \pi N/L)^2 - \lambda q_0^2|} \geq 0$. A computation shows that if $\lambda = 1$, then
\begin{align}\nonumber
& \tilde{v}_2^{\cut} = 
\begin{pmatrix}
 0 & \frac{i(k - i \mathfrak{r}(k) + \frac{\pi N}{L}) e^{-2 i (\theta - k L)}}{q_0} \\
 \frac{i (k + i \mathfrak{r}(k) + \frac{\pi N}{L}) e^{2 i (\theta - k L)}}{q_0} & 
 e^{-2 L\mathfrak{r}(k)} 
\end{pmatrix},
&& k \in \mathcal{C} \cap \R_+,	
	\\\label{tildev24cutexpexample}
& \tilde{v}_4^{\cut} = \begin{pmatrix}
 0 & \frac{i (k +i \mathfrak{r}(k) + \frac{\pi N}{L})e^{-2 i (\theta - k L)}}{q_0} \\
 \frac{i( k - i \mathfrak{r}(k) + \frac{\pi N}{L}) e^{2 i (\theta - k L)}}{q_0} & 1 
\end{pmatrix},
&& k \in \mathcal{C} \cap \R_-,
\end{align}
where $\mathfrak{r}(k) = \sqrt{q_0^2 - (k + \pi N/L)^2} \geq 0$, while if $\lambda=-1$, then
\begin{align}\nonumber
& \tilde{v}^{\cut}_{D_2} = \begin{pmatrix}
 1 & 0 \\
 \frac{2 i \mathfrak{r}(k) e^{2 i(\theta - k L)}}{q_0} & 1 
\end{pmatrix},
&& k \in \mathcal{C} \cap D_2,	
	\\ \label{tildev13cutexpexample}
& \tilde{v}^{\cut}_{D_3} = \begin{pmatrix}
 1 & -\frac{2 i \mathfrak{r}(k) e^{-2 i (\theta - k L)}}{q_0} \\
 0 & 1 
\end{pmatrix},
&& k \in \mathcal{C} \cap D_3,
\end{align}
where $\mathfrak{r}(k) = \sqrt{q_0^2 + (k + \pi N/L)^2} \geq 0$.
We conclude that in the case of the single exponential initial profile (\ref{singleexponentialq0}) with  $N \geq 1$, the RH problem \ref{RHmtilde} for $\tilde{m}$ can be formulated as follows.

\begin{figure}
\begin{center}
\begin{overpic}[width=.45\textwidth]{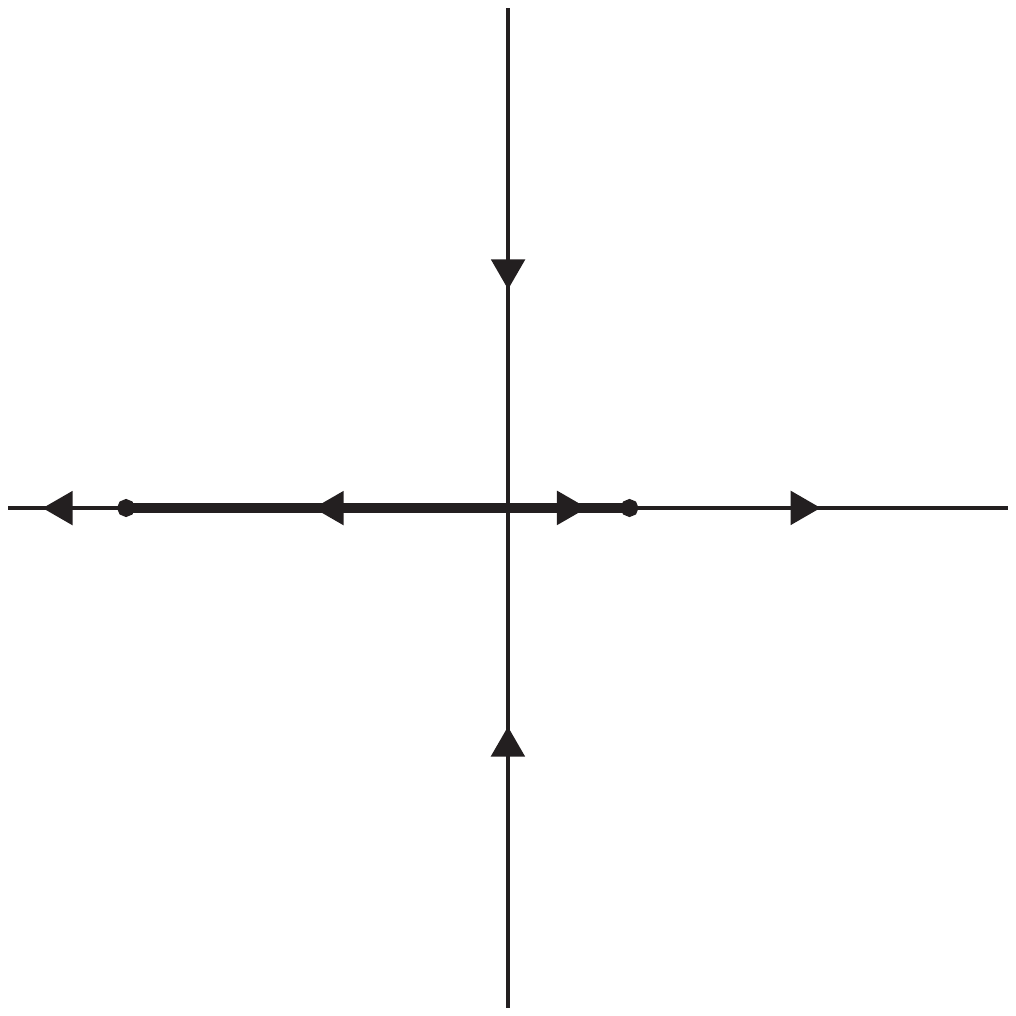}
      \put(2,43){\small $-\frac{\pi N}{L} -q_0$} 
      \put(51,43){\small $-\frac{\pi N}{L} + q_0$} 
      \put(52.5,72.5){\small $\tilde{v}_1$} 
      \put(76,53.5){\small $\tilde{v}_2$} 
      \put(54,54){\small $\tilde{v}_2^{\cut}$} 
      \put(52.5,25){\small $\tilde{v}_3$} 
      \put(4,53.5){\small $\tilde{v}_4$} 
      \put(30,54){\small $\tilde{v}_4^{\cut}$} 
     \end{overpic}\qquad
     \begin{overpic}[width=.45\textwidth]{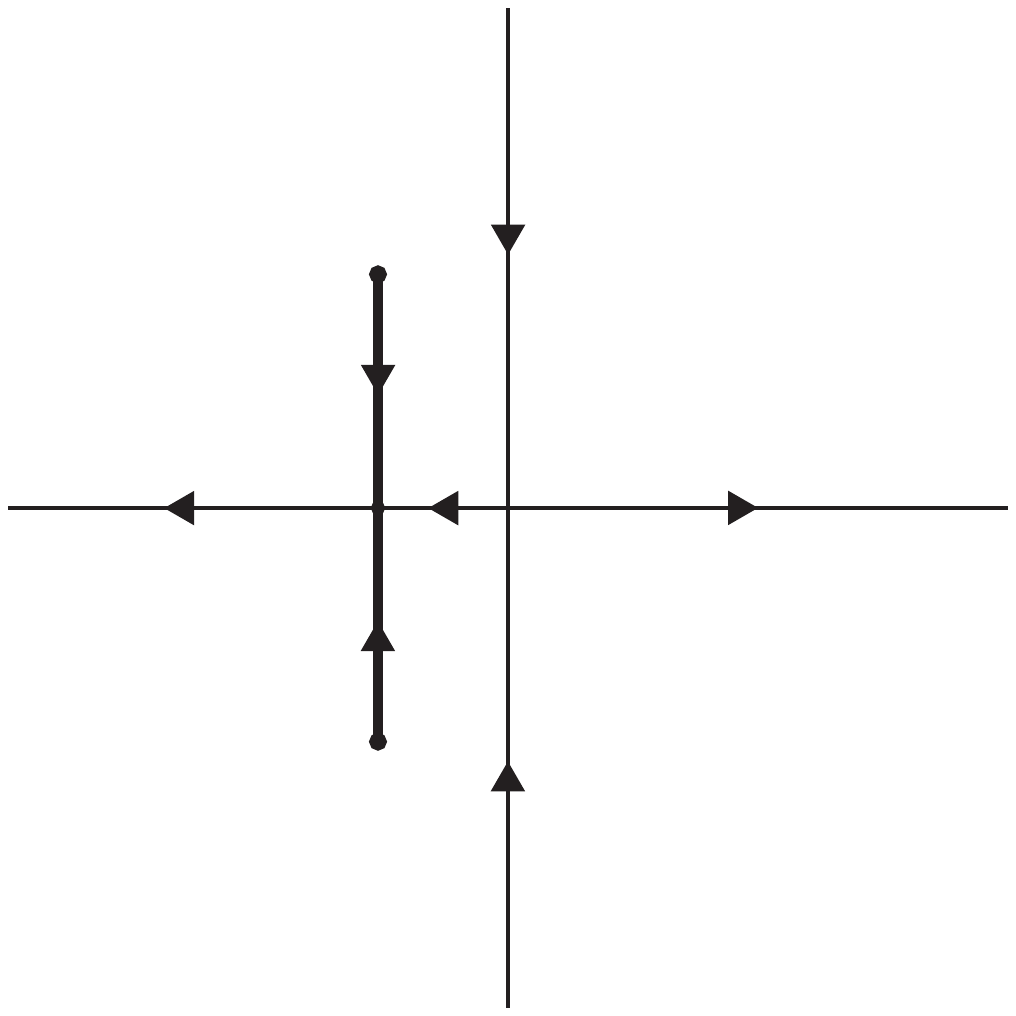}
      \put(11,25.5){\small $-\frac{\pi N}{L}-iq_0$} 
      \put(11,72.5){\small $-\frac{\pi N}{L}+iq_0$} 
      \put(52.5,75){\small $\tilde{v}_1$} 
      \put(39.4,63){\small $\tilde{v}^{\cut}_{D_2} $} 
      \put(70,53.5){\small $\tilde{v}_2$} 
      \put(52.5,22){\small $\tilde{v}_3$} 
      \put(39.4,35){\small $\tilde{v}^{\cut}_{D_3} $} 
      \put(16,53.5){\small $\tilde{v}_4$} 
      \put(42,53.2){\small $\tilde{v}_4$} 
     \end{overpic}
    \caption{\label{expexamplefig} The jump contour and jump matrices for RH problem \ref{RHmtildeexpexample} for $\lambda = 1$ (left) and $\lambda = -1$ (right).}
     \end{center}
\end{figure}

\begin{RHproblem}{\bf(The RH problem for single exponential initial datum)}\label{RHmtildeexpexample}
Find a $2 \times 2$-matrix valued function $\tilde{m}(x,t,k)$ with the following properties:
\begin{itemize}
\item $\tilde{m}(x,t,\cdot) : \C \setminus \tilde{\Sigma} \to \C^{2 \times 2}$ is analytic.

\item The limits of $m(x,t,k)$ as $k$ approaches $\tilde{\Sigma} \setminus \{0, -\frac{\pi N}{L}, \lambda^\pm\}$ from the left and right exist, are continuous on $\tilde{\Sigma} \setminus \{0, -\frac{\pi N}{L}, \lambda^\pm\}$, and satisfy
\begin{align}\label{mtildejumpexpexample}
  \tilde{m}_-(x,t,k) = \tilde{m}_+(x, t, k) \tilde{v}(x, t, k), \qquad k \in \tilde{\Sigma} \setminus \{0, -\pi N/L,  \lambda^\pm\},
\end{align}
where $\tilde{v}$ is defined as follows (see Figure \ref{expexamplefig}):
\begin{itemize}
\item If $\lambda = 1$, then
$$\tilde{v} = \begin{cases} 
\tilde{v}_1 & \text{on $i\R_+$},			\\
\tilde{v}_2 & \text{on $(-\frac{\pi N}{L} + q_0, +\infty) \cap \R_+$},	\\
\tilde{v}_2^{\cut} & \text{on $(-\frac{\pi N}{L} - q_0, -\frac{\pi N}{L} + q_0) \cap \R_+$},	\\
\tilde{v}_3 & \text{on $i\R_-$},			\\
\tilde{v}_4 & \text{on $(-\infty, -\frac{\pi N}{L}-q_0)$},	\\
\tilde{v}_4^{\cut} & \text{on $(-\frac{\pi N}{L}-q_0, -\frac{\pi N}{L} + q_0) \cap \R_-$},
\end{cases}$$
where $\{\tilde{v}_j\}_1^4$, $\tilde{v}_2^{\cut}$, and $\tilde{v}_4^{\cut}$
are defined by (\ref{tildevjexpexample}) and (\ref{tildev24cutexpexample}).

\item If $\lambda = -1$, then
$$\tilde{v} = \begin{cases} 
\tilde{v}_1 & \text{on $i\R_+$},	\\
\tilde{v}^{\cut}_{D_2}  & \text{on $(-\frac{\pi N}{L}, -\frac{\pi N}{L}+iq_0)$},	\\
\tilde{v}_2 & \text{on $\R_+$},			\\
\tilde{v}_3 & \text{on $i\R_-$},	\\
\tilde{v}^{\cut}_{D_3}  & \text{on $(-\frac{\pi N}{L}-iq_0, -\frac{\pi N}{L})$}, 	\\
\tilde{v}_4 & \text{on $\R_-$},
\end{cases}$$
where $\{\tilde{v}_j\}_1^4$, $\tilde{v}^{\cut}_{D_2}$, and $ \tilde{v}^{\cut}_{D_3}$
are defined by (\ref{tildevjexpexample}) and (\ref{tildev13cutexpexample}).

\end{itemize}

\item $\tilde{m}(x,t,k) = I  + O\big(k^{-1}\big)$ as $k \to \infty$, $k \in \C \setminus \cup_{n\in \Z} \mathcal{D}_n$.

\item $\tilde{m}(x,t,k) = O(1)$ as $k \to \{0, -\frac{\pi N}{L}, \lambda^\pm\}$, $k \in \C \setminus \tilde{\Sigma}$.

\end{itemize}
\end{RHproblem}

\begin{remark}
It is easy to verify that the jump matrices in RH problem \ref{RHmtildeexpexample}  satisfy the following consistency conditions at the origin:
\begin{align*}
\begin{cases}
 (\tilde{v}_4^{\cut} )^{-1} \tilde{v}_3 (\tilde{v}_2^{\cut} )^{-1} \tilde{v}_1\big|_{k=0} = I, & \lambda = 1,
	\\
\tilde{v}_4^{-1} \tilde{v}_3^{\cut} \tilde{v}_2^{-1} \tilde{v}_1^{\cut} \big|_{k=0} = I, & \lambda = -1.
\end{cases}
\end{align*}
\end{remark}

\subsection{Solution of the RH problem for $\tilde{m}$}
The RH problem \ref{RHmtildeexpexample} for $\tilde{m}$ can be solved explicitly by transforming it to a RH problem which has a constant off-diagonal jump across the branch cut $\mathcal{C} = (\lambda^-, \lambda^+)$.

The jump matrices $\tilde{v}_1$  and $\tilde{v}_3$ in (\ref{tildevjexpexample}) admit the factorizations
\begin{align*}
 & \tilde{v}_1 = \begin{pmatrix}
 1 & 0 \\
 \frac{i \lambda q_0 e^{2 i (\theta -k L+L r)}}{2 r} & 1   \end{pmatrix}\begin{pmatrix}
  \frac{2 \lambda  r (k - r+ \frac{\pi N}{L})}{q_0^2} &
   \frac{i \lambda  e^{-2 i (\theta - k L)} (k - r+ \frac{\pi N}{L})}{q_0} \\
 0 & \frac{k + r+ \frac{\pi N}{L}}{2 r} 
  \end{pmatrix},
  	\\
&  \tilde{v}_3 =  \begin{pmatrix}  
  1 & 0 \\
 \frac{i \lambda q_0 e^{2 i (\theta - k L)}}{2 r} & 1 
   \end{pmatrix}\begin{pmatrix}
 \frac{2 \lambda  r (k - r+ \frac{\pi N}{L})}{q_0^2} &
   \frac{i \lambda  (k - r+ \frac{\pi N}{L}) e^{-2 i (\theta - k L + L r)}}{q_0} \\
 0 & \frac{k + r+ \frac{\pi N}{L}}{2 r} 
  \end{pmatrix}.
\end{align*} 
It follows that the jump across $i\R$ can be removed by introducing a new solution $\hat{m}$ by
\begin{align}\label{expexamplehatmdef}
\hat{m} = \tilde{m} \times \begin{cases}
\begin{pmatrix}
 1 & 0 \\
 \frac{i \lambda q_0 e^{2 i (\theta -k L+L r)}}{2 r} & 1   \end{pmatrix}, & k \in D_1, 
 	\\
\begin{pmatrix}
  \frac{2 \lambda  r (k - r+ \frac{\pi N}{L})}{q_0^2} &
   \frac{i \lambda  e^{-2 i (\theta - k L)} (k - r+ \frac{\pi N}{L})}{q_0} \\
 0 & \frac{k + r+ \frac{\pi N}{L}}{2 r} 
  \end{pmatrix}^{-1}, & k \in D_2, 
 	\\
\begin{pmatrix}  
  1 & 0 \\
 \frac{i \lambda q_0 e^{2 i (\theta - k L)}}{2 r} & 1 
   \end{pmatrix}, & k \in D_3, 
 	\\
\begin{pmatrix}
 \frac{2 \lambda  r (k - r+ \frac{\pi N}{L})}{q_0^2} &
   \frac{i \lambda  (k - r+ \frac{\pi N}{L}) e^{-2 i (\theta - k L + L r)}}{q_0} \\
 0 & \frac{k + r+ \frac{\pi N}{L}}{2 r} 
  \end{pmatrix}^{-1}, & k \in D_4.
\end{cases}
\end{align}
Straightforward computations using (\ref{tildevjexpexample})-(\ref{tildev13cutexpexample}) show that $\hat{m}$  only has a jump across $(\lambda^-, \lambda^+)$.
Let us orient the contour
$$(\lambda^-, \lambda^+) = \begin{cases} (-\frac{\pi N}{L}-q_0,-\frac{\pi N}{L}+q_0), & \lambda = 1, \\ 
(-\frac{\pi N}{L}-iq_0, -\frac{\pi N}{L}+iq_0), & \lambda = -1, \end{cases}$$
to the right if $\lambda = 1$  and upward if $\lambda = -1$.
It follows that $\hat{m}(x,t,\cdot) : \C \setminus (\lambda^-, \lambda^+) \to \C^{2 \times 2}$ is analytic, that $\hat{m} = I  + O(k^{-1})$ as $k \to \infty$, and that $\hat{m}$  satisfies the jump condition
\begin{align}\label{mhatjumpexp}
\hat{m}_- = \hat{m}_+ \begin{pmatrix}  
 0 & f  \\
-1/f & 0 
  \end{pmatrix} \quad \text{for $k \in (\lambda^-, \lambda^+)$},
\end{align}  
where $f(k) \equiv f(x,t,k)$ is defined by
\begin{align}\label{expexamplefdef}
f(k) = -\frac{2 i \lambda r_+(k)}{q_0} e^{-2 i (\theta - k L)}
= \begin{cases} 
\frac{2 \sqrt{|q_0^2-(k + \frac{\pi N}{L})^2|} e^{-2 i (\theta - k L)}}{q_0}, & \lambda = 1,	\\
-\frac{2 i \sqrt{|(k + \frac{\pi N}{L})^2+q_0^2|} e^{-2 i (\theta - k L)}}{q_0}, & \lambda = -1.
\end{cases}
\end{align}

The jump matrix in (\ref{mhatjumpexp}) can be made constant (i.e., independent of $k$) by performing another transformation. Define $\delta(k) \equiv \delta(x,t,k)$ by
\begin{align*}
\delta(k) = e^{\frac{r(k)}{2\pi i} \int_{\lambda^-}^{\lambda^+} \frac{\ln f(s)}{r_+(s) (s-k)}
ds}, \qquad k \in \C \setminus (\lambda^-, \lambda^+).
\end{align*}
The function $\delta$ satisfies the jump relation $\delta_+ \delta_- = f$ on $(\lambda^-, \lambda^+)$ and $\lim_{k\to \infty} \delta(k) = \delta_\infty$, where
\begin{align*}
\delta_\infty = e^{-\frac{1}{2\pi i} \int_{\lambda^-}^{\lambda^+} \frac{\ln f(s)}{r_+(s)}ds}.
\end{align*}
Moreover, $\delta(k) = O((k-\lambda^\pm)^{1/4})$ and $\delta(k)^{-1} = O((k-\lambda^\pm)^{-1/4})$ as $k \to \lambda^\pm$.
Consequently, $\check{m} = \delta_\infty^{-\sigma_3} \hat{m} \delta^{\sigma_3}$ satisfies the following RH problem: (i) $\check{m}(x,t,\cdot) : \C \setminus (\lambda^-, \lambda^+) \to \C^{2 \times 2}$ is analytic, (ii) $\check{m} = I  + O(k^{-1})$ as $k \to \infty$, (iii) $\check{m} = O((k-\lambda^\pm)^{-1/4})$ as $k \to \lambda^\pm$, and (iv) $\check{m}$  satisfies the jump condition
$$\check{m}_- = \check{m}_+ \begin{pmatrix}  
 0 & 1  \\
-1 & 0 
  \end{pmatrix} \quad \text{for $k \in (\lambda^-, \lambda^+)$}.$$
The unique solution of this RH problem is given explicitly by
\begin{align}\label{mcheckexplicit}
\check{m} = \frac{1}{2}\begin{pmatrix} Q + Q^{-1} & i(Q - Q^{-1}) \\ -i(Q - Q^{-1}) & Q + Q^{-1} \end{pmatrix} \quad \text{with} \ \  Q(k) = \bigg(\frac{k-\lambda^+}{k - \lambda^-}\bigg)^{1/4},
\end{align}
where the branch of $Q: \C \setminus (\lambda^-, \lambda^+) \to \C$ is such that $Q \sim 1$ as $k \to \infty$. Since $\tilde{m}$ is easily obtained from $\check{m}$  by inverting the transformations $\tilde{m} \mapsto \hat{m} \mapsto \check{m}$, this provides an explicit solution of the RH problem for $\tilde{m}$. 

Using the explicit formula (\ref{mcheckexplicit}) for $\check{m}$ together with (\ref{recoverq}) we can find $q(x,t)$  for all $t$. Indeed, (\ref{recoverq}) implies
\begin{align}\label{expqfrommcheck}
q(x,t) 
= 2i\lim_{k \to \infty} k \hat{m}_{12}(x,t,k) 
= 2i \delta_\infty^{2} \lim_{k \to \infty} k \check{m}_{12}(x,t,k) 
= \frac{\lambda^+ - \lambda^-}{2} \delta_\infty^{2}.
\end{align}
In order to compute $\delta_\infty$, we note that
\begin{align}\label{expdeltainfty}
\delta_\infty = e^{c_0}
e^{\int_{\lambda^-}^{\lambda^+} \frac{s (x-L)}{\pi r_+(s)}ds}
e^{\int_{\lambda^-}^{\lambda^+} \frac{2s^2 t}{\pi r_+(s)}ds},
\end{align}
where the constant $c_0$ is given by
$$c_0 = -\frac{1}{2\pi i} \int_{\lambda^-}^{\lambda^+} \frac{\ln (- 2 i \lambda  r_+(s)) - \ln{q_0}}{r_+(s)}ds.$$
The integrals in (\ref{expdeltainfty}) involving $x - L$  and $t$ are easily computed by opening up the contour and performing a residue calculation (the only residue lies at infinity). This gives
$$\int_{\lambda^-}^{\lambda^+} \frac{s (x-L)}{\pi r_+(s)}ds
= \frac{i \pi N(x - L)}{L}, \qquad
\int_{\lambda^-}^{\lambda^+} \frac{2s^2 t}{\pi r_+(s)}ds
= - i \lambda q_0^2 t - \frac{2i\pi^2 N^2}{L^2} t.$$
If $\lambda = 1$, then the substitutions $s = -\pi N/L + \sigma$ and $\sigma = q_0 \sin\theta$ give
\begin{subequations}\label{expc0example}
\begin{align}
c_0 = \frac{1}{\pi} \int_{0}^{q_0} \frac{\ln (2\sqrt{q_0^2 - \sigma^2}) - \ln{q_0}}{\sqrt{q_0^2 - \sigma^2}}d\sigma
= \frac{1}{\pi} \int_{0}^{\pi/2} \ln (2\cos{\theta})d\theta 
= 0,
\end{align}
while, if $\lambda = -1$, then the substitutions $s = -\pi N/L + i\sigma$ and $\sigma = q_0 \sin\theta$ yield
\begin{align}\label{expc0lambda1}
c_0 = \frac{1}{\pi} \int_{0}^{q_0} \frac{\ln (- 2 i \sqrt{q_0^2 - \sigma^2}) - \ln{q_0}}{\sqrt{q_0^2 - \sigma^2}}  d\sigma
= \frac{1}{\pi} \int_{0}^{\pi/2} \ln (- 2 i \cos\theta)d\theta
= -\frac{\pi i}{4}.
\end{align}
\end{subequations}
It follows that
\begin{align}\label{deltainftyfinalexp}
\delta_\infty = e^{\frac{i \pi N(x - L)}{L}} e^{- i \lambda q_0^2 t - \frac{2i\pi^2 N^2}{L^2} t} \times \begin{cases} 1, & \lambda = 1, \\
e^{-\frac{\pi i}{4}}, & \lambda = -1.\end{cases}
\end{align}
Substituting this expression for $\delta_\infty$  into (\ref{expqfrommcheck}), we find that the solution $q(x,t)$ of (\ref{NLS}) corresponding to the initial datum $q(x,0) = q_0 e^{\frac{2i\pi N}{L} x}$ is given by 
\begin{align}\label{qexpexample}
q(x,t) = q_0 e^{\frac{2 i \pi N}{L} x} e^{- 2i \lambda q_0^2 t - \frac{4i \pi^2 N^2}{L^2} t}.
\end{align}
It is easy to verify that this $q$ indeed satisfies the correct initial value problem.

\begin{remark}[Finite-gap solutions]
We have shown that the single exponential solutions (\ref{qexpexample}) can be constructed by solving the RH problem \ref{RHmtilde} for $\tilde{m}$ directly. Associated to the solutions (\ref{qexpexample}) is the genus zero Riemann surface defined by the square root $\sqrt{4 - \Delta^2}$; we have seen that this is a two-sheeted cover of the complex plane with a branch cut along the single gap $(\lambda^-, \lambda^+)$. More generally, whenever the Riemann surface defined by $\sqrt{4 - \Delta^2}$ has finite genus (i.e., whenever the given initial condition corresponds to a finite-gap solution), we expect that a representation for the solution in terms of theta functions associated to the compact Riemann surface defined by $\sqrt{4 - \Delta^2}$ can be obtained by solving the RH problem \ref{RHmtilde}. 
\end{remark}

\paragraph{\bf Funding statement} The work of A.S.F. was supported by the EPSRC in the form of a senior fellowship. The work of J.L. was supported by the G\"oran Gustafsson Foundation, the Ruth and Nils-Erik Stenb\"ack Foundation, the Swedish Research Council, Grant No. 2015-05430, and the European Research Council, Grant Agreement No. 682537.

\medskip
\paragraph{\bf Acknowledgements} The authors are grateful to the two referees for several excellent suggestions.

\bibliographystyle{plain}
\bibliography{is}

\begin{thebibliography}{99}
\small

\bibitem{ABF1983}
M. J. Ablowitz, D. Bar Yaacov, and A. S. Fokas, On the inverse scattering transform for the Kadomtsev-Petviashvili equation, {\it Stud. Appl. Math.}  {\bf 69} (1983), 135--143.

\bibitem{AKNS1974}
M. J. Ablowitz, D. J. Kaup, A. C. Newell, and H. Segur, The inverse scattering transform-Fourier analysis for nonlinear problems, {\it Stud. Appl. Math.} {\bf 53} (1974), 249--315.
  
\bibitem{A1961}
N. I. Akhiezer, Continuous analogues of orthogonal polynomials on a system of intervals, {\it Dokl. Akad. Nauk SSSR} {\bf 141} (1961), 263--266.

\bibitem{B1928}
H. F. Baker, Note on the foregoing paper, ``commutative ordinary differential operators,'' by J. L. Burchnall and J. W. Chaundy, {\it Proc. R. Soc. A} {\bf 118} (1928), 584--593. 

\bibitem{BC1985}
R. Beals and R. R. Coifman, Multidimensional inverse scatterings and nonlinear partial differential equations. Pseudodifferential operators and applications (Notre Dame, Ind., 1984), 45--70, Proc. Sympos. Pure Math., 43, Amer. Math. Soc., Providence, RI, 1985.


\bibitem{BC1986}
R. Beals and R. R. Coifman, The D-bar approach to inverse scattering and nonlinear evolutions, {\it Phys. D} {\bf 18} (1986), 242--249.

\bibitem{BC1989}
R. Beals and R. R. Coifman, Linear spectral problems, nonlinear equations and the $\bar{\partial}$-method, {\it Inverse Problems}  {\bf 5} (1989), 87--130.

\bibitem{FLDxperiodic}
B. Deconinck, A. S. Fokas, and J. Lenells, The implementation of the unified transform to the nonlinear Schr\"odinger equation with periodic initial conditions, {\it Lett. Math. Phys.} {\bf 111}, 17 (2021). 

\bibitem{DHBvHS2004}
B. Deconinck, M. Heil, A. Bobenko, M. van Hoeij, and M. Schmies, Computing Riemann theta functions, {\it Math. Comp.} {\bf 73} (2004), 1417--1442. 

\bibitem{DTV2014}
B. Deconinck, T. Trogdon, and V. Vasan, 
The method of Fokas for solving linear partial differential equations, {\it SIAM Rev.} {\bf 56} (2014), 159--186. 
 
\bibitem{D1975}
B. A. Dubrovin, The inverse scattering problem for periodic finite-zone potentials, {\it Funct. Anal. Appl.} {\bf 9} (1975), 61--62.

\bibitem{DMN1976}
B. A. Dubrovin, V. B. Matveev and S. P. Novikov, Nonlinear equations of Korteweg-de Vries type, finite-zone linear operators and Abelian varieties, {\it Russian Math. Surveys}  {\bf 31} (1976), 56--134.


\bibitem{FM1976}
H. Flaschka and D. W. McLaughlin, Canonically conjugate variables for the Korteweg-de Vries equation and the Toda lattice with periodic boundary conditions, {\it Progr. Theoret. Phys.} {\bf 55} (1976), 438--456. 

\bibitem{F1997}
A. S. Fokas, A unified transform method for solving linear and certain nonlinear PDEs, 
{\it Proc. Roy. Soc. Lond.} A {\bf 453} (1997), 1411--1443.

\bibitem{F2000}
A. S. Fokas, On the integrability of linear and nonlinear partial differential equations, {\it J. Math. Phys.} {\bf 41} (2000), 4188--4237.

\bibitem{F2006}
A. S. Fokas, Integrable nonlinear evolution partial differential equations in $4+2$ and $3+1$ dimensions, 
{\it Phys. Rev. Lett.} {\bf 96} (2006), 190201.

\bibitem{FA1983}
A. S. Fokas and M. J. Ablowitz, The inverse scattering transform for the Benjamin-Ono equation---a pivot to multidimensional problems, {\it Stud. Appl. Math.} {\bf 68} (1983), 1--10.

\bibitem{FA1983b}
A. S. Fokas and M. J. Ablowitz, On the inverse scattering of the time-dependent Schr\"odinger equation and the associated Kadomtsev-Petviashvili equation, {\it Stud. Appl. Math.}  {\bf 69} (1983), 211--228. 

\bibitem{FA1983c}
A. S. Fokas and M. J. Ablowitz, Method of solution for a class of multidimensional nonlinear evolution equations, {\it Phys. Rev. Lett.} {\bf 51} (1983), 7--10.

\bibitem{FG1994}
A. S. Fokas and I. M. Gel'fand, Integrability of linear and nonlinear evolution equations and the associated nonlinear Fourier transforms, {\it Lett. Math. Phys.} {\bf 32} (1994), 189--210. 

\bibitem{FI1996}
A. S. Fokas and A. R. Its, The linearization of the initial-boundary value problem of the nonlinear Schr\"odinger equation,
{\it SIAM J. Math. Anal.} {\bf 27} (1996), 738--764. 

\bibitem{FI2004}
A. S. Fokas and  A. R. Its, The nonlinear Schr\"odinger equation on the interval, {\it J. Phys. A.} {\bf 37} (2004), 6091--6114.

\bibitem{FIS2005}
A. S. Fokas, A. R. Its, and L.-Y. Sung, The nonlinear Schr\"odinger equation on the half-line, 
{\it Nonlinearity} {\bf 18} (2005), 1771--1822.


\bibitem{FL2012}
A. S. Fokas and J. Lenells, The unified method: I. Nonlinearizable problems on the half-line, {\it J. Phys. A} {\bf 45} (2012), 195201, 38 pp.

\bibitem{FS1989}
A. S. Fokas and P. M. Santini, Coherent structures in multidimensions, {\it Phys. Rev. Lett.} {\bf 63} (1989), 1329--1333.


\bibitem{FS2012}
A. S. Fokas and E. A. Spence, Synthesis, as opposed to separation, of variables, {\it SIAM Rev.} {\bf 54} (2012), 291--324. 

\bibitem{FS1992}
A. S. Fokas and L.-Y. Sung, On the solvability of the $N$-wave, Davey-Stewartson and Kadomtsev-Petviashvili equations, {\it Inverse Problems}  {\bf 8} (1992), 673--708.

\bibitem{FV2018}
A. S. Fokas and M. C. van der Weele, Complexification and integrability in multidimensions, {\it J. Math. Phys.} {\bf 59} (2018), 091413, 13 pp.

\bibitem{GGKM1967}
C. S. Gardner, J. M. Greene, M. D. Kruskal, and R. M. Miura, Method for solving the Korteweg-de Vries equation, {\it Phys. Rev. Lett.} {\bf 19} (1967), 1095--1097.

\bibitem{GK2014}
B. Gr\'ebert and T. Kappeler, {\it The defocusing NLS equation and its normal form}, EMS Series of Lecture Notes in Mathematics, Z\"urich, 2014.

\bibitem{HM2018}
A. A. Himonas and D. Mantzavinos, Well-posedness of the nonlinear Schr\"odinger equation on the half-plane, arXiv:1810.02395.

\bibitem{I1976}
A. R. Its, Inversion of hyperelliptic integrals, and integration of nonlinear differential equations, {\it Vestnik Leningrad. Univ.} {\bf 9}  (1976), 121--128.

\bibitem{IK1976}
A. R. Its and V. P. Kotlyarov, Explicit formulas for solutions of a nonlinear Schr\"odinger equation, {\it Dokl. Akad. Nauk Ukrain. SSR Ser. A}  {\bf 11}  (1976), 965--968.

\bibitem{IM1975}
A. R. Its and V. B. Matveev, Hill operators with a finite number of lacunae, {\it Funct. Anal. Appl.} {\bf 9} (1975), 65--66. 

\bibitem{KvM1975}
M. Kac and P. van Moerbeke, On some periodic Toda lattices, Proc. Nat. Acad. Sci. USA {\bf 72} (1975), 1627--1629.

\bibitem{KR2005}
C. Klein and O. Richter, Ernst equation and Riemann surfaces. 
Analytical and numerical methods. Lecture Notes in Physics, 685. Springer-Verlag, Berlin, 2005.

\bibitem{K1975}
I. M. Krichever, Reflectionless potentials on the finite-gap background, {\it Funct. Anal. Appl.} {\bf 9} (1975), 84--87. 

\bibitem{L1968}
P. Lax, Integrals of nonlinear equations of evolution and solitary waves, {\it Comm. Pure Appl. Math.} {\bf 21} (1968), 467--490.

\bibitem{L1975}
P. D. Lax, Periodic solutions of the KdV equation, {\it Comm. Pure Appl. Math.} {\bf 28} (1975), 141--188.

\bibitem{LF2015} 
J. Lenells and A. S. Fokas, The nonlinear Schr\"odinger equation with $t$-periodic data: I. Exact results, {\it Proc. A.} {\bf 471} (2015), 20140925, 22 pp. 

\bibitem{LF2015b} 
J. Lenells and A. S. Fokas, The nonlinear Schr\"odinger equation with $t$-periodic data: II. Perturbative results, {\it Proc. A.}  {\bf 471} (2015), 20140926, 25 pp. 

\bibitem{M2008}
V. B. Matveev, 30 years of finite-gap integration theory, {\it Philos. Trans. R. Soc. Lond. Ser. A Math. Phys. Eng. Sci.}  {\bf 366} (2008), 837--875. 

\bibitem{MvM1975}
H. P. McKean and P. van Moerbeke, The spectrum of Hill's equation, {\it Invent. Math.} {\bf 30} (1975), 217--274. 

\bibitem{MN2019}
K. T-R McLaughlin and P. V. Nabelek, A Riemann-Hilbert problem approach to infinite gap Hill's operators and the Korteweg-de Vries equation, Int. Math. Res. Notices, doi.org/10.1093/imrn/rnz156.

\bibitem{N1974}
S. P. Novikov, A periodic problem for the Korteweg-de Vries equation. I.
{\it Funct. Anal. Appl.} {\bf 8}  (1974), 236--246.

\bibitem{IntegrableSystems1981}
S. P. Novikov et al, Integrable systems. Selected papers. London Mathematical Society Lecture Note Series, 60. Cambridge University Press, Cambridge-New York, 1981.

\bibitem{SF1991}
L.-Y. Sung and A. S. Fokas, Inverse problem for $N\times N$ hyperbolic systems on the plane and the N-wave interactions, {\it Comm. Pure Appl. Math.} {\bf 44} (1991), 535--571.

\bibitem{ZM1974}
V. E. Zakharov and S. V. Manakov, The complete integrability of the nonlinear Schr\"odinger equation, {\it Teoret. Mat. Fiz.}  {\bf 19} (1974), 332--343.

\bibitem{ZS1972} 
V. E. Zakharov and A. B. Shabat, Exact theory of two-dimensional self-focussing and one-dimensional self-modulation in nonlinear media, {\it Soviet Physics-JETP} {\bf 34} (1972), 62--69. 

\bibitem{ZS1974} 
V. E. Zakharov. L. A. Takhtajan, and L. D. Faddeev, A complete description of the solutions of the ``sine-Gordon'' equation, {\it Dokl. Akad. Nauk SSSR}  {\bf 219} (1974), 1334--1337.

\bibitem{OY2019}
T. \"Ozsar{\i} and N. Yolcu, The initial-boundary value problem for the biharmonic Schr\"odinger equation on the half-line, {\it Commun. Pure Appl. Anal.}  {\bf 18} (2019), 3285--3316.

\end{thebibliography}

\end{document}